\documentclass[preprint,amsmath]{revtex4}
\usepackage{amssymb,amsmath,amsthm,amsfonts,amsbsy,mathrsfs,amscd,dsfont}
\usepackage{graphicx,epsfig,subfigure,multirow,algorithm,algorithmic,array,longtable}
\usepackage{tikz,pgf,pgfplots,pgfplotstable}
\pgfplotsset{compat=1.17}
\usepackage[colorlinks,linkcolor=blue,anchorcolor=blue,citecolor=blue]{hyperref}
\usepackage{appendix}
\usepackage{enumerate}
\usepackage{amsthm}
\usepackage{graphicx}

\def\qed{\leavevmode\unskip\penalty9999 \hbox{}\nobreak\hfill
	\quad\hbox{\leavevmode  \hbox to.77778em{%
			\hfil\vrule   \vbox to.675em%
			{\hrule width.6em\vfil\hrule}\vrule\hfil}}
	\par\vskip3pt}

\newtheorem{theorem}{Theorem}
\newtheorem{corollary}{Corollary}

\newtheorem{example}{Example}

\graphicspath{{figures/},{pics/}}

\makeatletter

\newcommand{\Rmnum}[1]{\expandafter\@slowromancap\romannumeral #1@}
\makeatother

\begin{document}

\begin{center}
	\bf{Coherence measures in the strictly incoherent operation framework and its application in the multi-path interferometer}
\end{center}

\begin{center}
 Peiru Li,   Jingyan Liu, Ming-Jing Zhao$^\ast$
    
 \small  School of Science, Beijing Information Science and Technology University, Beijing, 102206, P. R. China \\

{$^\ast$}Correspondence: {zhaomingjingde@126.com} 

 \end{center}

{\bf Abstract}
Quantifying coherence is an essential endeavor in both quantum foundations and quantum technologies. In this paper, we study the coherence measures in terms of the diagonal states in the strictly incoherent operations framework. Specifically, we propose a coherence measure in terms of fidelity and provide its analytical expression. The relations between the proposed coherence measure and some other coherence measures are derived. Furthermore, we prove its monotonicity under incoherent operations. As an application, we explore the role of the proposed coherence measure in 
characterizing the waveness in the multi-path interferometer. 
As a result, some wave-particle dualities in terms of fidelity are presented.
This work not only
deepens the interpretation of the diagonal states on characterizing quantum states, but also
promotes the quantitative description of the wave-particle behaviors in the multi-path interferometer.

%{\bf Keywords} 

\section{Introduction}

Quantum coherence, arising from the superposition principle, serves as a resource enabling applications beyond the reach of classical mechanics. It supports diverse research directions such as quantum computation \cite{Nielsen2010}, quantum correlations \cite{Ma2016}, quantum biology and transport phenomena \cite{Lambert2013}. 
Moreover, quantum coherence has also been linked to work extraction \cite{Francica2020} and battery capacity \cite{MXLUO}  in quantum thermodynamics,  quantum chaos \cite{Anand2021}, and light-driven superconducting dynamics \cite{Luo2023}.

In the coherence resource theory, the framework construction starts with selecting a Hilbert space basis for the quantum system, under which diagonal states are free states. The free operations depend on distinct physical motivations \cite{Streltsov2017b}, including physically incoherent operations (PIOs) \cite{ChitambarPIO}, strictly
incoherent operations (SIOs), genuinely incoherent operations (GIOs) \cite{VicenteGIO}, incoherent operations (IOs) \cite{Streltsov2017b},  dephasing-covariant incoherent operations (DIOs), and maximal incoherent operations (MIOs). Different coherence resource frameworks can be established with respect to different free operations, while the IO framework is the most commonly discussed among them \cite{Baumgratz2014}.

In the IO framework, different coherence measures are introduced respectively from the perspectives of geometric distance  \cite{Baumgratz2014,Streltsov2015,Shao,Y. Jing},  
convex roof construction \cite{XYUAN,XQI,Liu2017},  robustness of coherence \cite{Napoli}, skew information coherence \cite{CSYU, LUO2018} and so on. For the coherence measures in the geometric approach, which is given by the minimal distance between $\rho$ and the set of incoherent states, they involve two aspects including the minimal distance and the optimal incoherent state. It shows that under the fixed reference basis $\{|i\rangle\}$, the diagonal state $\Delta(\rho)=\sum_i |i\rangle\langle i|\rho|i\rangle\langle i|$ is the optimal incoherent state reaching the minimal distance for the $l_1$ norm coherence \cite{Baumgratz2014}, the relative entropy coherence \cite{Baumgratz2014}, and the coherence measure based on the absolute norm \cite{Y. Jing}, where $\Delta$ denotes the dephasing operation. But it fails for the trace norm coherence \cite{Shao} or geometric coherence generally \cite{MJ}.
As the diagonal state $\Delta(\rho)$ is a natural incoherent state related to $\rho$, we wonder whether we can characterize the coherence by it straightforwardly. As expected, if we restrict to the SIO framework, we find that
the diagonal state $\Delta(\rho)$ is able to quantify the coherence in $\rho$ essentially and geometrically. Specifically, we propose a computable coherence measure $ C_{\mathrm{diag}}^{F}$ based on the fidelity in the SIO framework.
%and derive the analytical expression. 
%Additionally, we find $ C_{\mathrm{diag}}^{F}$ is also monotonic and acts as a well-defined coherence measure in IO framework in qubit systems. 

Generally, SIOs are a class of completely positive and trace-preserving (CPTP) maps that cannot generate or make use of coherence. 
In fact, 
the quantum operations in the interferometers are SIOs  \cite{Winter2016, Yadin2016}. 
This motivates us to explore the role of coherence measures in the SIO framework in the interferometer.

In the interferometer,  when a particle passes through the paths, it acts as both particle and wave. The particle nature corresponds to the ability to obtain path information, while the wave nature is manifested in the visibility of interference patterns. 
The wave-particle duality explains quantitatively 
that the wave and particle behaviors are mutually exclusive.
The first quantitative wave-particle duality relation in multi-path interferometers was proposed by D\"{u}rr, who also established criteria for the wave and particle measures \cite{durr}. The experiments about the complementarity relations in the interferometer are also demonstrated \cite{Ding2025,KSUN}.

%Recent studies have shown that wave–particle duality is not only reflected in the complementary relationship between which-path information and interference visibility \cite{Chen2022}, and the properties of the quantum source\cite{Qian2020}.

Recently, with the development of quantum coherence, the coherence measures are considered as the wave measure and some related wave-particle duality relations have been derived \cite{MH,EB,s3,n6,Kim,Bai2025,Ding2025}. 
However, the difference between the coherence measures and wave measures is not clear up to now. 
Following the criteria of the coherence measures and wave measures, we know that the coherence measures in the IO framework are surely a wave measure. But the converse is not necessarily true. For example, the waveness measure in \cite{durr} is based on $l_2$ norm, which violates the probabilistic monotonicity under IOs. Here we extend the quantification of waveness from the coherence measures in IO framework to that in SIO framework. 
In other words, we find the coherence measures in the SIO framework are able to serve as  the wave measure in the interferometer. 
Particularly,  we show a tight wave-particle duality in the two-path interferometer and a finer wave-particle duality in the multi-path interferometer based on our proposed coherence measure.
These complementarity relations limit the behaviors of particle and wave quantitatively.

This paper is structured as follows. In Sec.  \ref{preliminaries}, we present the coherence resource theory in the SIO framework and present a method for constructing coherence measures. In Sec. \ref{coherence measure}, we present an explicit coherence measure $ C_{\mathrm{diag}}^{F}$ using fidelity and provide its analytic expression. We also establish its relationship with some other coherence measures. Furthermore, we prove its monotonicity under IOs in qubit systems. In Sec. \ref{interferometer}, we discuss the applications of the coherence measures in the SIO framework in the interferometer and derive some wave-particle dualities in terms of fidelity. Finally, Sec. \ref{Conclusions} concludes the paper.

\section{Coherence measure in the SIO framework}
\label{preliminaries}

In the coherence resource theory in the SIO framework, the free states are incoherent states and the free operations are SIOs. More specifically, in a \(d\)-dimensional Hilbert space equipped with the orthonormal basis \(\{|i\rangle\}_{i = 1}^{d}\), a quantum state $\rho$ is called incoherent if its density matrix in this basis is diagonal. The set consisting of all incoherent states is denoted as \(\mathcal{I}\). In contrast, quantum states with density matrices that are non-diagonal with respect to this basis are coherent.
%SIOs can be characterized by the Kraus representations \(\Lambda(\rho) = \displaystyle\sum_{\mu} K_{\mu} \rho K_{\mu}^{\dagger}\) \cite{Baumgratz2014,Winter2016,Yao2015,Yadin2016}, with strictly incoherent operators \(K_{\mu}\) satisfying both \(\displaystyle\sum_{\mu} K_{\mu} \mathcal{I} K_{\mu}^{\dagger} \subset \mathcal{I}\) and \(\displaystyle\sum_{\mu} K_{\mu}^{\dagger} \mathcal{I} K_{\mu} \subset \mathcal{I}\).
%This implies that there exists at most one non-zero element in each row or each column of \(K_{\mu}\) \cite{Yao2015,Du2015}. This guarantees that each strictly incoherent operator $K_{\mu}$ commutes with the dephasing map \(\Delta(\cdot) = \displaystyle\sum_{i} |i\rangle\langle i|(\cdot)|i\rangle\langle i|\). 
SIOs can be characterized by the Kraus representations
\(\Lambda(\rho) = \sum_{\mu} K_{\mu} \rho K_{\mu}^{\dagger}\)
\cite{Baumgratz2014,Winter2016,Yao2015,Yadin2016}, where each strictly incoherent
Kraus operator \(K_\mu\) satisfies both
\(K_\mu \mathcal{I} K_\mu^\dagger \subset \mathcal{I}\) and
\(K_\mu^\dagger \mathcal{I} K_\mu \subset \mathcal{I}\).
These conditions imply that \(K_\mu\) has at most one non-zero entry in each
column and at most one non-zero entry in each row \cite{Yao2015,Du2015}.
Consequently, \(K_\mu\) is compatible with the dephasing map \(\Delta(\cdot) = \sum_i |i\rangle\langle i|(\cdot)|i\rangle\langle i|\) in the sense that \(\Delta(K_\mu \rho K_\mu^\dagger)=
K_\mu \Delta(\rho) K_\mu^\dagger .\)

A nonnegative function $C$ on quantum states is called a coherence measure in the SIO framework if $C$ satisfies the following five conditions \cite{Baumgratz2014}.
(C1) Faithfulness: \(C(\rho) \geq 0 \) and \( C(\rho) = 0 \) if and only if \( \rho \) is incoherent.
(C2) Monotonicity: $C(\Lambda(\rho)) \leq C(\rho)$ if $\Lambda$ is SIO.
(C3) Probabilistic monotonicity: \(\sum_{\mu} \mathrm{Tr}( K_{\mu}\rho K_{\mu}^\dagger) C\left[ \frac{K_{\mu}\rho K_{\mu}^\dagger}{\mathrm{Tr}(K_{\mu}\rho K_{\mu}^\dagger)} \right] \leq C(\rho) \) for any SIO \( \Lambda (\cdot)=\sum_{\mu}  K_{\mu}  (\cdot) K_{\mu}^\dagger \).
(C4) Convexity: \( C(\sum_j p_j\rho_j) \leq \sum_j p_j C(\rho_j) \) for any probability distribution \( \{p_j\} \) and quantum states \( \{\rho_j\} \).
(C5) Additivity for direct sum states: \( C[p\rho_1 \oplus (1-p)\rho_2] = pC(\rho_1) + (1-p)C(\rho_2) \), where \( p \in [0,1] \), \( \rho_1, \rho_2 \) are arbitrary quantum states.
In light of the equivalence of the conditions for coherence measure under the IO framework \cite{Yu2016}, we find that the conditions (C2), (C3) and (C4) together imply (C5).  Conversely, conditions (C3) and (C4) can be derived from (C2) and (C5). So to construct a coherence measure under the SIO framework, one needs to verify that it satisfies either conditions (C1), (C2) and (C5), or conditions (C1), (C2), (C3) and (C4).

Since the set of SIOs is a proper subset of IOs, the coherence measures under IOs are also those under SIOs. 
For example, the $l_1$ norm coherence \cite{Baumgratz2014} 
 \begin{equation}
C_{l_1}(\rho) =  \|\rho - \Delta(\rho)\|_{l_1} = \sum_{i\neq j} |\langle i|\rho|j\rangle|,
\end{equation} 
is defined from the $l_1$ norm with $\|A\|_{l_1}=\sum_{i,j}|A_{ij}|$ for any matrix $A=(A_{ij})$.
The relative entropy coherence \cite{Baumgratz2014} 
 \begin{equation}
C_{r}(\rho)= S(\rho|| \Delta(\rho))=S(\Delta(\rho))-S(\rho) 
\end{equation}
is defined from the relative entropy, where $S(\rho\|\sigma) = \mathrm{Tr}\bigl[\rho(\log\rho - \log\sigma)\bigr]$ is the quantum relative entropy and $S(\rho) = -\mathrm{Tr}(\rho\log\rho)$ is the von Neumann entropy.  
The geometric coherence \cite{Streltsov2015}
\begin{equation}\label{eq cg}
C_{\mathrm{g}}(\rho) = 1 - \max_{\delta\in\mathcal{I}}F^2(\rho,\delta),
\end{equation}
is defined by the fidelity  $F(\rho,\delta) = \mathrm{Tr}\sqrt{\sqrt{\rho}\delta\sqrt{\rho}}$.  These three are all well-defined coherence measures in the IO as well as the SIO framework.

For any quantum states $\rho$ and $\sigma$, 
suppose $D$ defined on any pair of states $ (\rho, \sigma)$ is a real function  satisfying 
\begin{enumerate}[(D1)]
 \item (Nonnegativity) $D(\rho,  \Delta(\rho))\geq 0$ and $D(\rho, \Delta(\rho))= 0$ if and only if $\rho= \Delta(\rho)$.
\item (Contractivity under SIOs)  $D(\Phi(\rho), \Phi(\sigma)) \leq D(\rho,\sigma)$ for arbitrary SIO $\Phi$.
\item (Additivity under direct sum states) $D(p_1\rho_1\oplus p_2\rho_2, p_1\sigma_1\oplus p_2\sigma_2) = p_1 D(\rho_1, \sigma_1)+ p_2 D(\rho_2, \sigma_2)$ for any probability distribution $\{p_1,p_2\}$ and density operators $\rho_i$ and $\sigma_i$ supporting on the same subspace, $i=1,2$.
\end{enumerate}
Here the function $D$ is not necessarily a metric. We do not demand the function $D$ is symmetric about two components or obeys the triangle inequality.  
Based on such function $D$, for any quantum state $\rho$, we define
\begin{equation}\label{def rd}
C_{\mathrm{diag}}(\rho) = D(\rho, \Delta(\rho)).
\end{equation}

\begin{theorem}\label{th mea d}
$C_{\mathrm{diag}}(\rho)$ is a well-defined coherence measure in the  SIO framework.
\end{theorem}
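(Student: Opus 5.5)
We verify conditions (C1), (C2) and (C5); by the equivalence recalled above (following \cite{Yu2016}), these imply (C3) and (C4), so $C_{\mathrm{diag}}$ is then a coherence measure in the SIO framework.

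\emph{Faithfulness (C1).} This is exactly (D1). We have $C_{\mathrm{diag}}(\rho)=D(\rho,\Delta(\rho))\ge 0$, with equality if and only if $\rho=\Delta(\rho)$, that is, if and only if $\rho\in\mathcal{I}$.

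\emph{Monotonicity (C2).} Let $\Lambda(\cdot)=\sum_\mu K_\mu(\cdot)K_\mu^\dagger$ be an SIO. The key identity is $\Delta(\Lambda(\rho))=\Lambda(\Delta(\rho))$. It follows by summing over $\mu$ the Kraus-level relation $\Delta(K_\mu\rho K_\mu^\dagger)=K_\mu\Delta(\rho)K_\mu^\dagger$ stated earlier, and linearity of $\Delta$. Applying (D2) to the pair $(\rho,\Delta(\rho))$ then gives
\begin{equation*}
C_{\mathrm{diag}}(\Lambda(\rho))=D(\Lambda(\rho),\Delta(\Lambda(\rho)))=D(\Lambda(\rho),\Lambda(\Delta(\rho)))\le D(\rho,\Delta(\rho))=C_{\mathrm{diag}}(\rho).
\end{equation*}
Commutation of SIOs with dephasing is the only structural input needed here.

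\emph{Additivity (C5).} Take $\rho=p\rho_1\oplus(1-p)\rho_2$, where the direct sum is taken with respect to a partition of the reference basis $\{|i\rangle\}$ into two blocks. Dephasing acts blockwise, so $\Delta(\rho)=p\Delta(\rho_1)\oplus(1-p)\Delta(\rho_2)$. Here $\rho_i$ and $\Delta(\rho_i)$ live on the same block subspace, so the hypothesis of (D3) holds, and
\begin{equation*}
C_{\mathrm{diag}}(\rho)=pD(\rho_1,\Delta(\rho_1))+(1-p)D(\rho_2,\Delta(\rho_2))=pC_{\mathrm{diag}}(\rho_1)+(1-p)C_{\mathrm{diag}}(\rho_2).
\end{equation*}

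The main point requiring care is the reduction to (C1), (C2), (C5). The cited equivalence \cite{Yu2016} is proved for the IO framework, so we must check that it transfers to SIOs. The derivation of (C3) from (C2) and (C5) uses the auxiliary channel $\rho\mapsto\sum_\mu K_\mu\rho K_\mu^\dagger\otimes|\mu\rangle\langle\mu|$, which produces a direct-sum state indexed by the flag $\mu$. We need this channel to be strictly incoherent whenever each $K_\mu$ is. Its Kraus operators are $K_\mu\otimes|\mu\rangle$, and each still has at most one nonzero entry per row and per column, so the channel is an SIO. With that check, the argument of \cite{Yu2016} goes through verbatim. Convexity (C4) follows in the same way, by appending an incoherent flag register, applying (C5), and then using (C2) for the partial trace over the flag, which is also an SIO.
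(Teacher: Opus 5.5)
Your proposal is correct and follows essentially the same route as the paper: (C1) from (D1), (C2) from $\Delta(\Lambda(\rho))=\Lambda(\Delta(\rho))$ together with (D2), and (C5) from blockwise dephasing together with (D3), followed by the reduction of (C3) and (C4) to (C2) and (C5); the paper's proof calls the additivity step (C3), but it is really (C5). Your explicit check that the flagged channel with Kraus operators $K_\mu\otimes|\mu\rangle$ and the partial trace over the flag are themselves SIOs justifies the transfer of the IO-framework equivalence to SIOs, which the paper only asserts in Section II.
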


\begin{proof}
First, the faithfulness (C1) is obvious.
Second, for any quantum state $\rho$ and strictly incoherent operation $\Phi$, it has
\begin{eqnarray*}
   C_{\mathrm{diag}} (\Phi(\rho))&=&D(\Phi(\rho), \Delta(\Phi(\rho)))\\
    &=& D(\Phi(\rho), \Phi(\Delta(\rho)))\\
    &\leq& D(\rho, \Delta(\rho))\\
    &=&C_{\mathrm{diag}} (\rho),
\end{eqnarray*}
%where the first equality is the definition in Eq. (\ref{def rd}), the second equality is the property of strictly incoherent operation $\Phi$, the third inequality is 
by the contractivity of the function $D$. So $C_{\mathrm{diag}}(\rho)$ satisfies the monotonicity (C2).
Third, for an arbitrary probability distribution $\{p_1,p_2\}$ and density operators $\rho_1$ and $\rho_2$,
\begin{eqnarray*}
 C_{\mathrm{diag}}(p_1\rho_1\oplus p_2\rho_2)&=&
D(p_1\rho_1\oplus p_2\rho_2, \Delta(p_1\rho_1\oplus p_2\rho_2)) \\
&=& D(p_1\rho_1\oplus p_2\rho_2, p_1 \Delta(\rho_{1}) \oplus p_2 \Delta(\rho_{2}))\ \\
&=&   p_1 D(\rho_1, \Delta(\rho_{1}))+ p_2 D(\rho_2, \Delta(\rho_{2}))\\
&=&  p_1  C_{\mathrm{diag}}(\rho_1) +p_2  C _{\mathrm{diag}}(\rho_2).
\end{eqnarray*}   
%where the first equality is the definition of $C_{\mathrm{diag}}(\rho)$ in Eq. (\ref{def rd}), the second equality is the property of direct sum, the third equality is the additivity under direct sum states of the function $D$. 
So $C_{\mathrm{diag}}(\rho)$ satisfies the condition (C3). Therefore, $C_{\mathrm{diag}}(\rho)$ is a well-defined coherence measure in the SIO framework.
\end{proof}

Theorem \ref{th mea d} presents a method for constructing coherence measures using diagonal states $\Delta(\rho)$. Since the dephasing operation $\Delta$ relies on the choice of the reference basis,  the proposed coherence measure  $C_{\mathrm{diag}}(\rho)$ is an inherently basis-dependent quantity.
A key advantage of this kind of coherence measures lies in their avoidance of the optimization process, which renders the calculation—governed by function \(D\)—more convenient and efficient. In fact, the $l_1$ norm coherence and relative entropy coherence can be viewed as this kind of coherence measures, with the function $D$ specified as the $l_1$ distance $D_1(\rho, \sigma)=||\rho-\sigma||_{l_1}$ and the relative entropy $D_2(\rho, \sigma)=S(\rho||\sigma)$ respectively.
We find a similar idea of the coherence measure has already been proposed in a general resource theory in \cite{Ziwen}, as well as the imaginarity measures in \cite{Liu2026}. Here we aim to study it systematically and specifically in the SIO framework.

\section{The fidelity-based coherence measure}
\label{coherence measure}

Now for any quantum states $\rho$ and $\sigma$, we specify the function $D$ based on the fidelity as $D(\rho, \sigma) = 1-F(\rho, \sigma)$. One can verify that this function meets three conditions (D1)-(D3). Subsequently, we derive the coherence measure in the SIO framework,
\begin{equation}
  C_{\mathrm{diag}}^{F} (\rho) = 1 - F(\rho, \Delta(\rho)).  
\end{equation}
First, in qubit systems, the coherence measure \(C_{\mathrm{diag}}^{F}(\rho)\) can be expressed by the Bloch vector of the quantum state.

\begin{theorem}   \label{th qubit eq}
For any qubit state $\rho = \frac{1}{2}(\mathbb{I} + \boldsymbol{r} \cdot \boldsymbol{\sigma})$, with $ \boldsymbol{r}=(r_1, r_2, r_3)^T$, $\boldsymbol{\sigma}=(\sigma_1, \sigma_2, \sigma_3)^T$, three Pauli operators \(\sigma_1 = |0\rangle\langle1| + |1\rangle\langle0|\), \(\sigma_2 = {\rm i}(|0\rangle\langle1| - |1\rangle\langle0|)\), \(\sigma_3 = |0\rangle\langle0| - |1\rangle\langle1|\), \(\vert \boldsymbol{r} \vert = \sqrt{r_{1}^{2} + r_{2}^{2} + r_{3}^{2}} \leq 1\), the coherence measure $C_{\mathrm{diag}}^{F}(\rho)$ is 
\begin{equation} \label{eq:diag_coherence}
        C_{\mathrm{diag}}^{F}(\rho) = 1 - \frac{\sqrt{2}}{2}  \left[ \sqrt{(1 - |\boldsymbol{r}|^2)(1 - r_3^2)} + (1 + r_3^2) \right]^{\frac{1}{2}}.
        \end{equation}
\end{theorem}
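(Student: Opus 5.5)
The plan is to compute $F(\rho,\Delta(\rho))$ through the standard closed form for the fidelity of two qubit states, which avoids any explicit matrix square roots. For $2\times 2$ density matrices $\rho,\sigma$ one has
\begin{equation*}
F(\rho,\sigma)^2=\mathrm{Tr}(\rho\sigma)+2\sqrt{\det\rho\,\det\sigma}.
\end{equation*}
We take $\sigma=\Delta(\rho)=\frac12(\mathbb{I}+r_3\sigma_3)$, i.e.\ the state with Bloch vector $(0,0,r_3)^T$.

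First we justify the identity. The eigenvalues $\lambda_1,\lambda_2\ge 0$ of $M=\sqrt{\rho}\,\sigma\sqrt{\rho}$ satisfy
\begin{equation*}
\lambda_1+\lambda_2=\mathrm{Tr}(\rho\sigma),\qquad \lambda_1\lambda_2=\det\rho\,\det\sigma.
\end{equation*}
Hence
\begin{equation*}
F^2=\bigl(\mathrm{Tr}\sqrt{M}\bigr)^2=\bigl(\sqrt{\lambda_1}+\sqrt{\lambda_2}\bigr)^2=\lambda_1+\lambda_2+2\sqrt{\lambda_1\lambda_2},
\end{equation*}
which is the formula above. This holds also when $\rho$ is pure, since then $\det\rho=0$ and $M$ has rank at most one.

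Next we evaluate the ingredients in Bloch form. Using $\mathrm{Tr}(\sigma_i\sigma_j)=2\delta_{ij}$, we get
\begin{equation*}
\mathrm{Tr}(\rho\sigma)=\tfrac12(1+\boldsymbol{r}\cdot\boldsymbol{s})=\tfrac12(1+r_3^2),
\end{equation*}
where $\boldsymbol{s}=(0,0,r_3)^T$ is the Bloch vector of $\sigma$. Since a qubit state with Bloch vector $\boldsymbol{v}$ has determinant $(1-|\boldsymbol{v}|^2)/4$,
\begin{equation*}
\det\rho=\tfrac14(1-|\boldsymbol{r}|^2),\qquad \det\Delta(\rho)=\tfrac14(1-r_3^2).
\end{equation*}
Substituting gives
\begin{equation*}
F^2=\tfrac12\Bigl[(1+r_3^2)+\sqrt{(1-|\boldsymbol{r}|^2)(1-r_3^2)}\Bigr].
\end{equation*}
Taking the square root yields
\begin{equation*}
F=\tfrac{\sqrt2}{2}\Bigl[\sqrt{(1-|\boldsymbol{r}|^2)(1-r_3^2)}+(1+r_3^2)\Bigr]^{1/2},
\end{equation*}
and then $C_{\mathrm{diag}}^F(\rho)=1-F$ is exactly Eq.~(\ref{eq:diag_coherence}).

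The only step that needs real justification is the trace–determinant identity for $\mathrm{Tr}\sqrt{M}$. The one subtle point there is that $M$ is positive semidefinite, so its eigenvalues have real nonnegative square roots. Everything after that is direct substitution.
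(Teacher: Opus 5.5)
Your proposal is correct and follows the paper's proof essentially verbatim: both compute $\det\rho=\frac{1-|\boldsymbol{r}|^2}{4}$, $\det\Delta(\rho)=\frac{1-r_3^2}{4}$ and $\mathrm{Tr}(\rho\Delta(\rho))=\frac{1+r_3^2}{2}$, then substitute them into the qubit fidelity formula $F^2=\mathrm{Tr}(\rho\tau)+2\sqrt{\det\rho\,\det\tau}$. The only difference is that you derive this formula from the trace and determinant of $\sqrt{\rho}\,\tau\sqrt{\rho}$ rather than citing it, which is correct and makes the argument self-contained.
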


    \begin{proof} 
     For any qubit state $\rho = \frac{1}{2}(\mathbb{I} + \boldsymbol{r} \cdot \boldsymbol{\sigma})$, its diagonal part $\Delta(\rho) = \frac{1}{2}(\mathbb{I} + r_3 \sigma_3 )$, their determinants are $\det(\rho) = \frac{1 - |\mathbf{r}|^2}{4}$ and $\det(\Delta(\rho)) = \frac{1 - r_3^2}{4}$ respectively. 
        Recall that the fidelity between  qubit states $\rho$ and $\tau$ \cite{Miszczak2009,Zhang2018} is 
        \begin{eqnarray}  \label{eq qubit cdiag}
        F(\rho, \tau) = \left[ 2 \sqrt{\det(\rho) \det(\tau)} + \mathrm{Tr}(\rho \tau) \right]^{\frac{1}{2}}. 
        \end{eqnarray}
         Substituting $\label{eq:tr_rho_rho_diag}\mathrm{Tr}\left(\rho\Delta(\rho)\right) = \frac{1 + r_3^2}{2}$ into Eq. \eqref{eq qubit cdiag}, 
     we directly obtain Eq. \eqref{eq:diag_coherence}.
    \end{proof}

From Theorem \ref{th qubit eq} we see that for any given purity $|\boldsymbol{r}|$ of the quantum states, the coherence measure $C_{\mathrm{diag}}^{F}(\rho)$ decreases with $|r_3|$. This behavior is similar to the relative entropy coherence.
More generally, in higher dimensional systems, the coherence measure  $C_{\mathrm{diag}}^{F}(\rho)$ can be expressed in the form of the trace norm.

\begin{theorem}\label{th c d}
    For any $d$-dimensional quantum state $\rho$ with the spectral decomposition $\rho = \sum_{j=1}^d \lambda_j |u_j\rangle\langle u_j|$, suppose its diagonal state $\Delta(\rho) = \sum_{i=1}^d p_i |i\rangle\langle i|$ under the reference basis $\{|i\rangle\}_{i=1}^d$, then the coherence measure $C_{\mathrm{diag}}^{F}(\rho)$
is
   \begin{equation}\label{eq c d}
     C_{\mathrm{diag}}^{F}(\rho) =1- \| M \|_{\mathrm{tr}},  
   \end{equation}
    where the matrix $M=(M_{ij})_{d \times d}$ with $M_{ij} = \sqrt{\lambda_ip_j}\,\langle u_i|j\rangle$ and $\|M\|_{\mathrm{tr}}=\mathrm{Tr}\sqrt{M^{\dagger}M}$ denotes the trace norm. 
\end{theorem}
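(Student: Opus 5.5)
The plan is to rewrite the fidelity $F(\rho,\Delta(\rho))=\mathrm{Tr}\sqrt{\sqrt{\rho}\,\Delta(\rho)\sqrt{\rho}}$ as the trace norm of a single matrix, and then read off the entries of that matrix in suitable bases. The key identity is that for any positive semidefinite $A,B$ one has $\mathrm{Tr}\sqrt{\sqrt{A}B\sqrt{A}}=\|\sqrt{A}\sqrt{B}\|_{\mathrm{tr}}$. To see this, set $X=\sqrt{B}\sqrt{A}$. Then $X^\dagger X=\sqrt{A}B\sqrt{A}$, so $\mathrm{Tr}\sqrt{X^\dagger X}=\|X\|_{\mathrm{tr}}$. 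Moreover $\|X\|_{\mathrm{tr}}=\|X^\dagger\|_{\mathrm{tr}}$, because $XX^\dagger$ and $X^\dagger X$ share their nonzero spectrum. Applying this with $A=\rho$ and $B=\Delta(\rho)$ gives
\begin{equation*}
F(\rho,\Delta(\rho))=\big\|\sqrt{\rho}\sqrt{\Delta(\rho)}\big\|_{\mathrm{tr}}.
\end{equation*}

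Next I would insert the two decompositions $\sqrt{\rho}=\sum_i\sqrt{\lambda_i}\,|u_i\rangle\langle u_i|$ and $\sqrt{\Delta(\rho)}=\sum_j\sqrt{p_j}\,|j\rangle\langle j|$. Their product is
\begin{equation*}
\sqrt{\rho}\sqrt{\Delta(\rho)}=\sum_{i,j}\sqrt{\lambda_i p_j}\,\langle u_i|j\rangle\,|u_i\rangle\langle j| = U M V^\dagger .
\end{equation*}
Here $U=\sum_i |u_i\rangle\langle i|$ and $V=\sum_j |j\rangle\langle j|=\mathbb{I}$ are unitaries, and $M$ is exactly the matrix in the statement, with $M_{ij}=\sqrt{\lambda_i p_j}\langle u_i|j\rangle$. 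In other words, $M$ is the matrix of the operator $\sqrt{\rho}\sqrt{\Delta(\rho)}$ written with respect to the eigenbasis $\{|u_i\rangle\}$ on the output side and the reference basis $\{|j\rangle\}$ on the input side.

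Since the trace norm is unitarily invariant, $\|UMV^\dagger\|_{\mathrm{tr}}=\|M\|_{\mathrm{tr}}$. Substituting into $C_{\mathrm{diag}}^F(\rho)=1-F(\rho,\Delta(\rho))$ then yields Eq. \eqref{eq c d}.

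The only delicate step is the first identity, in particular getting the order of the factors right. The definition of $F$ naturally produces $\|\sqrt{\Delta(\rho)}\sqrt{\rho}\|_{\mathrm{tr}}$, whereas $M$ corresponds to $\sqrt{\rho}\sqrt{\Delta(\rho)}$. These two matrices are adjoints of each other, so I will explicitly invoke the fact that $\|X\|_{\mathrm{tr}}=\|X^\dagger\|_{\mathrm{tr}}$ to close that gap. Everything else is bookkeeping in the chosen bases.
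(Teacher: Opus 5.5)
Your proof is correct, but it takes a different route from the paper.

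The paper's route:
\begin{itemize}
\item It fixes the purifications $|\psi\rangle=\sum_i\sqrt{\lambda_i}\,|u_i\rangle\otimes|i\rangle$ of $\rho$ and $|\phi\rangle=\sum_i\sqrt{p_i}\,|i\rangle\otimes|i\rangle$ of $\Delta(\rho)$.
\item It invokes Uhlmann's theorem to write $F(\rho,\Delta(\rho))=\max_U|\langle\psi|(I\otimes U)|\phi\rangle|=\max_U|\mathrm{Tr}(MU^T)|$.
\item It then uses the variational characterization of the trace norm, with the maximum attained at the unitary from the polar decomposition of $M$, to obtain $\|M\|_{\mathrm{tr}}$.
\end{itemize}

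Your route is different:
\begin{itemize}
\item You verify directly the closed-form identity $F(\rho,\sigma)=\|\sqrt{\rho}\sqrt{\sigma}\|_{\mathrm{tr}}$. The ordering issue you flagged is handled correctly by $\|X\|_{\mathrm{tr}}=\|X^\dagger\|_{\mathrm{tr}}$.
\item You observe that $M$ is exactly the matrix of $\sqrt{\rho}\sqrt{\Delta(\rho)}$ with respect to $\{|u_i\rangle\}$ on the output side and the reference basis on the input side.
\item Unitary invariance of the trace norm then closes the argument.
\end{itemize}

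What each approach buys:
\begin{itemize}
\item Your argument is more elementary and self-contained. It needs neither Uhlmann's theorem nor the fact that $\max_U|\mathrm{Tr}(MU)|=\|M\|_{\mathrm{tr}}$. It also makes transparent that the theorem is just the standard formula $F=\|\sqrt{\rho}\sqrt{\sigma}\|_{\mathrm{tr}}$ written in a convenient mixed basis.
\item The paper's argument ties $\|M\|_{\mathrm{tr}}$ to an overlap of purifications and exhibits the optimal unitary explicitly, which suits an operational interpretation.
\item Logically, both rest on the same polar-decomposition fact that underlies your identity.
\end{itemize}
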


\begin{proof}
    For any $d$-dimensional quantum state $\rho$ with the spectral decomposition $\rho = \sum_{j=1}^d \lambda_j |u_j\rangle\langle u_j|$, let $|\psi\rangle = \sum_{i=1}^{d} \sqrt{\lambda_i} \, |u_i\rangle\otimes |i\rangle $
    be the purification of $\rho$. Correspondingly, 
   for the diagonal state $\Delta(\rho) = \sum_{i=1}^d p_i |i\rangle\langle i|$, let $|\phi\rangle = \sum_{i=1}^{d} \sqrt{p_{i}} \, |i\rangle\otimes |i\rangle$
      be the purification of $\Delta(\rho)$.
  According to \cite{Nielsen2010}, the fidelity between $\rho$ and  $\Delta(\rho)$ is 
   \[
    \begin{aligned}
    F(\rho, \Delta(\rho)) 
    &= \max_{U} \bigl| \langle\psi| (I \otimes U) |\phi\rangle \bigr| \\
    &= \max_{U} \bigl|\sum_{i,j} \sqrt{\lambda_i p_j} \, \langle u_i | j \rangle \, \langle i | U | j \rangle  \bigr|\\
    &=\max_{U} \bigl| \operatorname{Tr}(M U^T) \bigr|\\
    &= \| M \|_{\operatorname{tr}},
    \end{aligned}
    \] 
    where $U$ is a unitary operator on the auxiliary system,  and $M=(M_{ij})$ with $M_{ij}=\sqrt{\lambda_i p_j}\,\langle u_i|j\rangle$. The maximum is attained if we choose $U^T=V^\dagger$ for the polar decomposition of $M$ as $M=|M|V$. Therefore we obtain the formula in Eq. (\ref{eq c d}).
\end{proof}

Theorem \ref{th c d} shows the amount of coherence is determined by the eigenvalues and eigenvectors of the quantum state as well as the reference basis and the diagonal entries of the density matrix. 
However, if only the diagonal entries of the density matrix are available, the coherence measure \(C_{\mathrm{diag}}^{F}\) can be estimated from above by the following upper bound, which becomes exact for pure states.

\begin{theorem} \label{properties}
    In $d$-dimensional systems,  for any quantum state  $\rho=\displaystyle\sum_{i,j=1}^{d} \rho_{ij} |i\rangle\langle j|$, the coherence measure $C_{\mathrm{diag}}^{F}(\rho)$ is upper bounded by 
    \begin{equation}   \label{eq C pu}
        C_{\mathrm{diag}}^{F}(\rho) \leq 1 - \sqrt{
        \displaystyle\sum_{i=1}^{d}  \rho_{ii}^2},
    \end{equation}
and the equality holds if and only if $\rho$ is a pure state. 
\end{theorem}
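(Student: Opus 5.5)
The plan is to compare the fidelity with a Hilbert--Schmidt quantity. Since $C_{\mathrm{diag}}^{F}(\rho)=1-F(\rho,\Delta(\rho))$, Eq.~(\ref{eq C pu}) is equivalent to the lower bound
\begin{equation*}
F(\rho,\Delta(\rho))\geq \sqrt{\textstyle\sum_i \rho_{ii}^2}.
\end{equation*}
I will use the equivalent form $F(\rho,\sigma)=\mathrm{Tr}\sqrt{\sqrt{\rho}\sigma\sqrt{\rho}}=\|\sqrt{\rho}\sqrt{\sigma}\|_{\mathrm{tr}}$. This holds because $(\sqrt{\rho}\sqrt{\sigma})(\sqrt{\rho}\sqrt{\sigma})^\dagger=\sqrt{\rho}\sigma\sqrt{\rho}$.

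The key step is the elementary norm inequality $\|A\|_{\mathrm{tr}}\geq \|A\|_2$. Writing the singular values of $A$ as $s_k\geq 0$, this reads $\sum_k s_k\geq (\sum_k s_k^2)^{1/2}$. Equality holds if and only if at most one $s_k$ is nonzero, i.e. $\mathrm{rank}\,A\leq 1$. Apply this with $A=\sqrt{\rho}\sqrt{\Delta(\rho)}$. Then
\begin{equation*}
\|A\|_2^2=\mathrm{Tr}\bigl(\sqrt{\rho}\Delta(\rho)\sqrt{\rho}\bigr)=\mathrm{Tr}\bigl(\rho\Delta(\rho)\bigr)=\sum_i\rho_{ii}^2 ,
\end{equation*}
because $\Delta(\rho)$ is diagonal with entries $\rho_{ii}$. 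This gives $F(\rho,\Delta(\rho))\geq\sqrt{\sum_i\rho_{ii}^2}$ and hence the inequality. In effect this is $F^2(\rho,\sigma)\geq \mathrm{Tr}(\rho\sigma)$, and it matches the qubit formula (\ref{eq qubit cdiag}), where the extra term $2\sqrt{\det\rho\det\tau}$ is exactly the gap.

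For the equality case, I will show that equality holds if and only if $\mathrm{rank}(\sqrt{\rho}\sqrt{\Delta(\rho)})\leq 1$, and that this rank equals $\mathrm{rank}\,\rho$.

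If $\rho$ is pure, $\sqrt{\rho}=\rho$ has rank one, so $A$ has rank at most one and equality holds.

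For the converse, which I expect to be the only delicate point, I need $\mathrm{rank}(\sqrt{\Delta(\rho)}\sqrt{\rho})=\mathrm{rank}\sqrt{\rho}$. Let $P$ be the projector onto the support of $\Delta(\rho)$, i.e. onto $\mathrm{span}\{|i\rangle:\rho_{ii}>0\}$.
\begin{enumerate}[(i)]
\item By positivity of $\rho$, $\rho_{ii}=0$ forces the whole $i$-th row and column of $\rho$ to vanish. Hence $\mathrm{supp}\,\rho\subseteq \mathrm{range}\,P$, and so $\mathrm{range}\sqrt{\rho}\subseteq\mathrm{range}\,P$.
\item $\sqrt{\Delta(\rho)}$ is injective on $\mathrm{range}\,P$.
\end{enumerate}
Together these give $\mathrm{rank}(\sqrt{\Delta(\rho)}\sqrt{\rho})=\mathrm{rank}\sqrt{\rho}=\mathrm{rank}\,\rho$. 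Since rank is invariant under taking adjoints, $\mathrm{rank}(\sqrt{\rho}\sqrt{\Delta(\rho)})=\mathrm{rank}\,\rho$ as well.

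Thus equality forces $\mathrm{rank}\,\rho\leq1$. Since $\mathrm{Tr}\rho=1$, $\rho$ is a pure state, which completes the characterization of equality.
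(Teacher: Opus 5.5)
Your proof is correct and is essentially the paper's argument. Your inequality $\|A\|_{\mathrm{tr}}\geq\|A\|_2$ for $A=\sqrt{\rho}\sqrt{\Delta(\rho)}$ is exactly the paper's step $\mathrm{Tr}\sqrt{X}\geq\sqrt{\mathrm{Tr}X}$ applied to $X=\sqrt{\rho}\Delta(\rho)\sqrt{\rho}=AA^\dagger$. Your support argument showing $\mathrm{rank}(\sqrt{\rho}\sqrt{\Delta(\rho)})=\mathrm{rank}\,\rho$ supplies a justification the paper omits when it asserts that equality holds if and only if $\rho$ has rank one.
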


\begin{proof}
For any mixed state $\rho$, the coherence measure $C_{\mathrm{diag}}^{F}$ is 
    \begin{equation}
    \label{upper bound on mixed states}
    \begin{split}
    C_{\mathrm{diag}}^{F}(\rho)%=1-F(\rho,\Delta(\rho)) 
    &= 1-\mathrm{Tr}\sqrt{\sqrt{\rho}\Delta(\rho)\sqrt{\rho}} \\
    & \leq 1-\sqrt{\mathrm{Tr}\left(\sqrt{\rho}\Delta(\rho)\sqrt{\rho}\right)} \\
  %  &= 1-\sqrt{\mathrm{Tr}(\rho\Delta(\rho))} \\
    &= 1-\sqrt{\sum_{i}\rho_{ii}^2}.
    \end{split}
    \end{equation}
    The first inequality arises from the positive semi-definiteness of \( \sqrt{\rho}\Delta(\rho)\sqrt{\rho} \) and the inequality becomes an equality if and only if $\rho$ is rank-1 (i.e., pure). 
    This completes the proof.
\end{proof}

From Theorem \ref{properties}, we know $ 0\leq C_{\mathrm{diag}}^{F}(\rho)\leq 1-\frac{\sqrt{d}}{d}$. The minimum is attained, $C_{\mathrm{diag}}^{F}(\rho)=0$, if and only if $\rho$ is incoherent,  $\rho= \Delta(\rho)$. The maximum is attained, $ C_{\mathrm{diag}}^{F}(\rho) = 1-\frac{\sqrt{d}}{d}$, if and only if  $\rho$ is maximally coherent, that is, $\rho=|\Psi^{\text{mcs}}\rangle\langle \Psi^{\text{mcs}}|$ with $|\Psi^{\text{mcs}}\rangle = \frac{1}{\sqrt{d}} \displaystyle\sum_{i=1}^{d} e^{{\rm i} \theta_i} |i\rangle$.
Next, we illustrate the estimation in Theorem \ref{properties} by an explicit example.

\begin{example}
Consider the \(d\)-dimensional quantum state
\begin{equation}\label{eq ex-1}
    \rho = p|0\rangle\langle 0| + (1-p)|\psi\rangle\langle \psi|,
\end{equation}
where $|\psi\rangle $ is any  pure state generated randomly and \(p \in [0, 1]\).
For the cases $d=2,5,10$, we have calculated both the exact value $ C_{\mathrm{diag}}^{F}$ and the upper bound provided by Theorem \ref{properties} (See FIG. \ref{fig:figure2}). 
The plots show that the upper bound is attained exactly when $p=0$ or $p=1$.

\begin{figure}[ht] 
    \centering
    \includegraphics[width=0.8\textwidth]{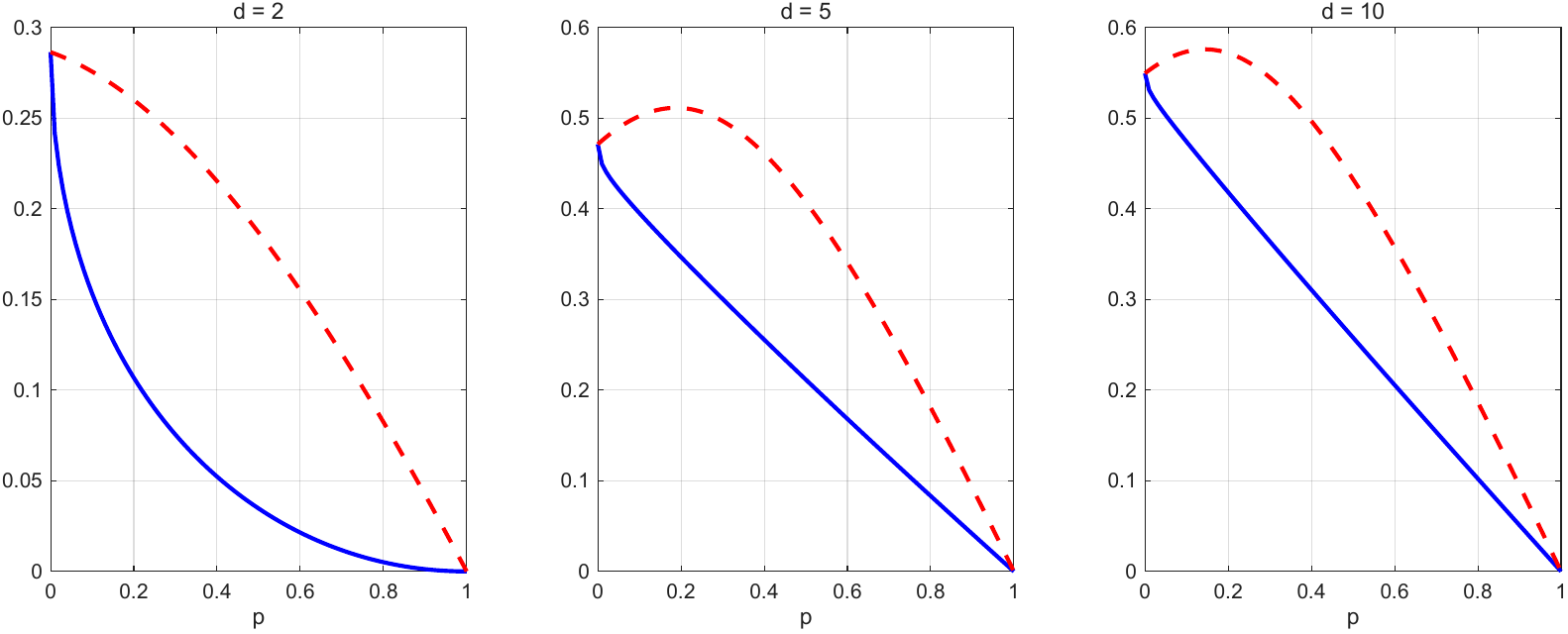}
    \caption{(Color Online) The comparison of the exact value $ C_{\mathrm{diag}}^{F}$ and its upper bound in Theorem \ref{properties} for quantum states in Eq. (\ref{eq ex-1}). The solid blue curve represents the exact value $ C_{\mathrm{diag}}^{F}$, while the dashed red curve represents the upper bound given by Theorem \ref{properties}.}
     \label{fig:figure2}
\end{figure}
\end{example}

Based on Theorem  \ref{properties}, the relationship between the coherence measure $ C_{\mathrm{diag}}^{F}(\rho)$ and geometric coherence can be established.

\begin{theorem}
\label{cg}
For any quantum state \( \rho \), the geometric coherence \( C_{\mathrm{g}}(\rho) \) and the coherence measure \( C_{\mathrm{diag}}^{F} \) satisfy 
    \begin{equation}
     1 - \sqrt{1 - C_{\mathrm{g}}(\rho)} \leq C_{\mathrm{diag}}^{F}(\rho) \leq C_{\mathrm{g}}(\rho).
    \end{equation}
\end{theorem}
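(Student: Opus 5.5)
Both inequalities compare $F(\rho,\Delta(\rho))$ with the optimal fidelity $F_{\max}(\rho)=\max_{\delta\in\mathcal{I}}F(\rho,\delta)$. Note that $C_{\mathrm{g}}(\rho)=1-F_{\max}^2$ by Eq.~(\ref{eq cg}), so $1-\sqrt{1-C_{\mathrm{g}}(\rho)}=1-F_{\max}$. The claimed chain is therefore equivalent to
\begin{equation*}
F_{\max}^2\le F(\rho,\Delta(\rho))\le F_{\max}.
\end{equation*}
I will prove each half separately.

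The right-hand inequality $F(\rho,\Delta(\rho))\le F_{\max}$ is immediate, because $\Delta(\rho)\in\mathcal{I}$ is one admissible candidate in the maximization. This gives $C_{\mathrm{diag}}^{F}(\rho)=1-F(\rho,\Delta(\rho))\ge 1-F_{\max}=1-\sqrt{1-C_{\mathrm{g}}(\rho)}$, which is the lower bound in the statement.

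The upper bound $C_{\mathrm{diag}}^{F}(\rho)\le C_{\mathrm{g}}(\rho)$ requires $F(\rho,\Delta(\rho))\ge F_{\max}^2$. I would not try to reduce this to pure states via the bound of Theorem~\ref{properties}. That bound only goes in the direction $F(\rho,\Delta(\rho))\ge\sqrt{\sum_i\rho_{ii}^2}$, and pure-state optimality does not transfer directly to mixed states. Instead I would use a characterization of the optimal fidelity through dephasing.

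\textbf{Step 1.} Let $\delta^{*}$ be an optimal incoherent state. Using $\Delta(\delta^{*})=\delta^{*}$ and the monotonicity of fidelity under the CPTP map $\Delta$, we get $F(\rho,\delta^{*})\le F(\Delta(\rho),\delta^{*})$.

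\textbf{Step 2.} I want a ``triangle-like'' inequality $F(\rho,\Delta(\rho))\ge F(\rho,\delta^*)\,F(\delta^*,\Delta(\rho))$ or similar. Here I would use the sine (purified) distance $P=\sqrt{1-F^2}$, or the Bures angle $A=\arccos F$, both of which satisfy the triangle inequality:
\begin{equation*}
A(\rho,\Delta(\rho))\le A(\rho,\delta^*)+A(\delta^*,\Delta(\rho))\le 2A(\rho,\delta^*),
\end{equation*}
where the second inequality comes from Step 1. This yields
\begin{equation*}
F(\rho,\Delta(\rho))\ge\cos(2\theta)=2F_{\max}^2-1,\qquad \theta=A(\rho,\delta^*).
\end{equation*}
This only gives $C_{\mathrm{diag}}^{F}\le 2C_{\mathrm{g}}$, so it is too weak. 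It would need to be sharpened.

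\textbf{Step 2 (sharper route), the main obstacle.} Take purifications attaining the fidelity, namely $|\psi\rangle$ of $\rho$ and $|\phi\rangle$ of $\delta^*$ with $|\langle\psi|\phi\rangle|=F_{\max}$. Apply the dephasing map, dilated as the controlled copying isometry $V|i\rangle=|i\rangle|i\rangle$, to obtain a purification-like comparison. Alternatively, use the known formula for the optimal fidelity,
\begin{equation*}
F_{\max}^2=\max_{\{p_i\}}\Bigl(\sum_i\sqrt{p_i}\,|\ldots|\Bigr)^2,
\end{equation*}
together with the explicit form $F(\rho,\Delta(\rho))=\|M\|_{\mathrm{tr}}$ from Theorem~\ref{th c d}. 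The cleanest attempt is the operator inequality
\begin{equation*}
\mathrm{Tr}\sqrt{\sqrt\rho\,\Delta(\rho)\sqrt\rho}\ \ge\ \mathrm{Tr}\sqrt{\sqrt\rho\,\delta\sqrt\rho}\,^{2}
\end{equation*}
for every incoherent $\delta$. To prove it, I would write $F(\rho,\delta)^2=\max_U|\mathrm{Tr}(\sqrt\rho\sqrt\delta U)|^2$, apply Cauchy--Schwarz to the sum over diagonal indices, using $\sqrt\delta=\sum_i\sqrt{q_i}|i\rangle\langle i|$, to get
\begin{equation*}
F(\rho,\delta)^2\le\Bigl(\sum_i\sqrt{q_i}\,\bigl|\langle i|U\sqrt\rho|i\rangle\bigr|\Bigr)^2\le\sum_i\bigl|\langle i|U\sqrt\rho|i\rangle\bigr|^2/(\ldots),
\end{equation*}
and then compare with $\mathrm{Tr}(\sqrt\rho\sqrt{\Delta(\rho)}W)$ for a suitable unitary $W$, which lower-bounds $F(\rho,\Delta(\rho))$. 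Closing this Cauchy--Schwarz comparison is the step I expect to be hardest. If it resists, a fallback is to check that the inequality is tight and true for pure states: there $F(\rho,\Delta(\rho))=\sqrt{\sum_i\rho_{ii}^2}$ by Theorem~\ref{properties}, and $F_{\max}^2=\max_i\rho_{ii}$, so the inequality holds because $\sum_i\rho_{ii}^2\ge(\max_i\rho_{ii})^2$. I would then try to extend to mixed states via the joint concavity of fidelity and convex-roof decompositions.
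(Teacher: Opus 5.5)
\textbf{There is a genuine gap in the upper bound.} Your lower bound is correct and is the paper's argument: $\Delta(\rho)\in\mathcal{I}$ is feasible, so $F(\rho,\Delta(\rho))\le F_{\max}$. The upper bound $C_{\mathrm{diag}}^{F}\le C_{\mathrm{g}}$, however, is never established. The Bures-angle route only gives $C_{\mathrm{diag}}^{F}\le 2C_{\mathrm{g}}$, as you note. The ``sharper route'' stops at placeholders. The fallback leaves the passage from pure to mixed states as something you ``would try''.

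The missing ingredient is that geometric coherence equals its own convex roof, $C_{\mathrm{g}}(\rho)=\min_{\{p_k,|\psi_k\rangle\}}\sum_k p_k C_{\mathrm{g}}(|\psi_k\rangle)$, which is exactly what the paper invokes. Your claim that pure-state optimality does not transfer to mixed states is what this identity overturns. For an optimal decomposition,
\[
C_{\mathrm{g}}(\rho)=\sum_k p_k C_{\mathrm{g}}(|\psi_k\rangle)\ge\sum_k p_k C_{\mathrm{diag}}^{F}(|\psi_k\rangle)\ge C_{\mathrm{diag}}^{F}(\rho).
\]
The first inequality is your pure-state check $\sqrt{\sum_i\rho_{ii}^2}\ge\max_i\rho_{ii}$. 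The second is convexity of $C_{\mathrm{diag}}^{F}$, which follows from joint concavity of $F$ together with $\sum_k p_k\Delta(|\psi_k\rangle\langle\psi_k|)=\Delta(\rho)$. Convexity of $C_{\mathrm{g}}$ alone points the wrong way, since it only gives $C_{\mathrm{g}}(\rho)\le\sum_kp_kC_{\mathrm{g}}(|\psi_k\rangle)$. So the nontrivial direction of the convex-roof identity must be cited or proved. Appealing to ``joint concavity and convex-roof decompositions'' without it does not close the argument.

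\textbf{How to close it with your own idea.} Your Cauchy--Schwarz step proves exactly the needed direction if you aim it at a decomposition of $\rho$, rather than at $\mathrm{Tr}(\sqrt\rho\sqrt{\Delta(\rho)}W)$. Take $\delta^*=\sum_i q_i|i\rangle\langle i|$ and $U$ optimal, so that $F_{\max}=\bigl|\sum_i\sqrt{q_i}\,\langle i|U\sqrt\rho|i\rangle\bigr|$. Put $|\tilde\psi_i\rangle=\sqrt\rho\,U^\dagger|i\rangle$. Then $\sum_i|\tilde\psi_i\rangle\langle\tilde\psi_i|=\rho$, so $p_i=\langle\tilde\psi_i|\tilde\psi_i\rangle$ and $|\psi_i\rangle=|\tilde\psi_i\rangle/\sqrt{p_i}$ (dropping terms with $p_i=0$) form an ensemble for $\rho$. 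Cauchy--Schwarz with $\sum_iq_i=1$ gives
\[
1-C_{\mathrm{g}}(\rho)=F_{\max}^2\le\sum_i p_i\bigl|\langle i|\psi_i\rangle\bigr|^2\le\sum_i p_i\bigl(1-C_{\mathrm{g}}(|\psi_i\rangle)\bigr),
\]
that is, $\sum_ip_iC_{\mathrm{g}}(|\psi_i\rangle)\le C_{\mathrm{g}}(\rho)$. Chaining this with the pure-state inequality and convexity of $C_{\mathrm{diag}}^{F}$, as in the display above, completes the proof without appealing to the literature.
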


\begin{proof}
The first inequality  %$1 - \sqrt{1 - C_{\mathrm{g}}(\rho)} \leq C_{\mathrm{diag}}^{F}(\rho) $ 
is obvious by the fact that
      $F(\rho,\Delta(\rho)) \leq  \max_{\delta\in\mathcal{I}}F(\rho,\delta)$.
Now we prove the second inequality.
For any pure state $|\psi\rangle=\sum_i \psi_i |i\rangle$, 
      we know 
      \begin{equation}\label{eq cg cd p}
          C_{\mathrm{diag}}^{F}(|\psi\rangle)=1- \sqrt{\displaystyle\sum_i |\psi_{i} |^4}\leq 1 - \displaystyle\max_i \psi_{i}^2=C_{\mathrm{g}}(|\psi\rangle),
      \end{equation}
           where we have used the relation \( \displaystyle\sum_i |\psi_{i} |^4 \geq \max_i |\psi_{i} |^4 \).
 As we know, the geometric coherence  \(C_{\mathrm{g}}(\rho)\) in Eq. (\ref{eq cg}) can also be expressed in the convex roof construction, that is, \(C_{\mathrm{g}}(\rho) = \min_{\{p_k, |\psi_k\rangle\}}\sum_k p_k C_{\mathrm{g}}(|\psi_k\rangle)\), where the minimum runs over all possible pure-state decompositions of $\rho$ \cite{Streltsov2000,Li2019}.
        For any mixed state \(\rho\), suppose
        \(\rho = \sum_k p_k |\psi_k\rangle\langle\psi_k|\)
        is the optimal pure-state decomposition of \(\rho\) that achieves the minimum average coherence of \(C_{\mathrm{g}}(\rho)\), namely,  \(C_{\mathrm{g}}(\rho) =\sum_k p_k C_{\mathrm{g}}(|\psi_k\rangle)\).
         Thus we have 
      \begin{equation}
          C_{\mathrm{g}}(\rho)=\sum_k p_k C_{\mathrm{g}}(|\psi_k\rangle ) \geq \sum_k p_k C_{\mathrm{diag}}^{F}(|\psi_k\rangle )\geq C_{\mathrm{diag}}^{F} (\rho),
      \end{equation}
      due to Eq. (\ref{eq cg cd p}) and the convexity of coherence measure $C_{\mathrm{diag}}^{F}$.    This completes the proof.
\end{proof}

Furthermore, Ref. \cite{{Liu2017}} proposes a fidelity-based coherence measure in convex roof construction as
\begin{equation}
C_F(\rho)=\min_{\{p_k, |\psi_k\rangle\}}\sum_k p_k C_F(|\psi_k\rangle),
\end{equation}
where the minimum runs over all possible pure-state decompositions of $\rho$, and 
\begin{equation}\label{eq cf}
    C_F(|\psi\rangle)=\min_{\sigma\in \cal{I}} [1-F^2(|\psi\rangle,\sigma)]^\frac{1}{2}.
\end{equation}
By the definition, it is obvious that the geometric coherence and the fidelity-based coherence in Eq. (\ref{eq cf}) are related by
\begin{equation}\label{eq cf cg}
    C_F^2(|\psi\rangle)=C_{\mathrm{g}} (|\psi\rangle)
\end{equation}
 for any pure state. Moreover, the fidelity-based coherence measure and \( C_{\mathrm{diag}}^{F} \) are related as follows.
 
\begin{theorem}
For any quantum state $\rho$, the coherence measure \( C_{\mathrm{diag}}^{F}(\rho) \)
and the fidelity-based coherence measure $C_F(\rho)$ satisfy
\begin{equation}\label{eq cf cdf}
1-\sqrt{1-C_F^2(\rho)}
\leq
C_{\rm diag}^F(\rho)
\leq
C_F(\rho).
\end{equation}
\end{theorem}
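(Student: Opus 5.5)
The plan is to reduce both inequalities in Eq. (\ref{eq cf cdf}) to Theorem \ref{cg} and to the pure-state identity (\ref{eq cf cg}), using only the convex-roof structure of $C_F$ and $C_{\mathrm{g}}$ together with the convexity of $C_{\mathrm{diag}}^F$ (a consequence of Theorem \ref{th mea d}). The key intermediate claim is the comparison
\begin{equation*}
C_F^2(\rho)\le C_{\mathrm{g}}(\rho)
\end{equation*}
for every state $\rho$, since this inequality feeds directly into the lower bound.

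\emph{Lower bound.} Let $\rho=\sum_k p_k|\psi_k\rangle\langle\psi_k|$ be the decomposition attaining the convex roof $C_{\mathrm{g}}(\rho)=\sum_k p_k C_{\mathrm{g}}(|\psi_k\rangle)$. By the definition of $C_F$ as a minimum over decompositions and by Eq. (\ref{eq cf cg}),
\begin{equation*}
C_F(\rho)\le \sum_k p_k C_F(|\psi_k\rangle)=\sum_k p_k \sqrt{C_{\mathrm{g}}(|\psi_k\rangle)}\le \sqrt{\sum_k p_k C_{\mathrm{g}}(|\psi_k\rangle)}=\sqrt{C_{\mathrm{g}}(\rho)}.
\end{equation*}
The second inequality is Jensen's inequality for the concave function $\sqrt{\cdot}$. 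Hence $C_F^2(\rho)\le C_{\mathrm{g}}(\rho)$. The map $x\mapsto 1-\sqrt{1-x}$ is increasing on $[0,1]$, so
\begin{equation*}
1-\sqrt{1-C_F^2(\rho)}\le 1-\sqrt{1-C_{\mathrm{g}}(\rho)}.
\end{equation*}
The right-hand side is at most $C_{\mathrm{diag}}^F(\rho)$ by the first inequality of Theorem \ref{cg}, which gives the lower bound.

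\emph{Upper bound.} For a pure state, Eq. (\ref{eq cg cd p}) and Eq. (\ref{eq cf cg}) give
\begin{equation*}
C_{\mathrm{diag}}^F(|\psi\rangle)\le C_{\mathrm{g}}(|\psi\rangle)=C_F^2(|\psi\rangle)\le C_F(|\psi\rangle).
\end{equation*}
The last step holds because $0\le C_F(|\psi\rangle)\le 1$, which is immediate from $0\le F\le 1$. For a mixed $\rho$, take the decomposition $\{p_k,|\psi_k\rangle\}$ that is optimal for $C_F(\rho)$. Convexity (C4) of $C_{\mathrm{diag}}^F$ then yields
\begin{equation*}
C_{\mathrm{diag}}^F(\rho)\le\sum_k p_k C_{\mathrm{diag}}^F(|\psi_k\rangle)\le \sum_k p_k C_F(|\psi_k\rangle)=C_F(\rho).
\end{equation*}

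The only step that is not routine is the comparison $C_F^2\le C_{\mathrm{g}}$ for mixed states. The two convex roofs are generally attained by different decompositions, so the pure-state equality (\ref{eq cf cg}) does not transfer directly. Evaluating $C_F$ on the decomposition optimal for $C_{\mathrm{g}}$ and then applying Jensen's inequality resolves this. One should also check that the minima in both convex roofs are actually attained, so that an optimal decomposition exists; this holds by compactness of the set of decompositions with a bounded number of terms, as is already assumed in the proof of Theorem \ref{cg}.
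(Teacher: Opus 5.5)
Your proposal is correct and follows essentially the paper's route. The lower bound comes from $C_F^2(\rho)\le C_{\mathrm{g}}(\rho)$ on the $C_{\mathrm{g}}$-optimal decomposition, combined with the first inequality of Theorem \ref{cg}; your use of concavity of $\sqrt{\cdot}$ is equivalent to the paper's use of convexity of the square. The upper bound rests on the same pure-state chain $C_{\mathrm{diag}}^F\le C_{\mathrm{g}}=C_F^2\le C_F$. The only difference is in how you pass to mixed states: you apply convexity of $C_{\mathrm{diag}}^F$ directly to the $C_F$-optimal decomposition, while the paper goes through the mixed-state inequality $C_{\mathrm{diag}}^F\le C_{\mathrm{g}}$ and then the convex-roof bound for $C_{\mathrm{g}}$. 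Attainment of the minima is not actually needed, since both chains work with infima.
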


\begin{proof}
On one hand, let $\rho=\sum_k p_k|\psi_k\rangle\langle\psi_k|$ be an optimal pure-state decomposition of $\rho$ for the fidelity-based coherence $C_F(\rho)$, that is, $C_F(\rho)=\sum_k p_k C_F(|\psi_k\rangle)$. 
For such pure-state decomposition, 
we  have
\begin{eqnarray*}
   C_{\rm diag}^F(\rho) &\leq&     C_{\mathrm{g}}(\rho) \\
    &\leq& \sum_k p_k C_{\mathrm{g}}(|\psi_k\rangle)\\
&=&
\sum_k p_k C_F^2(|\psi_k\rangle) \\
&\leq&
\sum_k p_k C_F(|\psi_k\rangle) \\
&=&
C_F(\rho),
\end{eqnarray*}
where the first inequality is by Theorem \ref{cg}, the second inequality is by the definition of geometric coherence $C_{\mathrm{g}}$, the third line is by Eq. (\ref{eq cf cg}), the last inequality is due to $C_F^2(|\psi\rangle)
\leq C_F(|\psi\rangle)$ as $0\leq C_F(|\psi\rangle)\leq 1$. Therefore we derive the second inequality in Eq. (\ref{eq cf cdf}).

On the other hand, assume $\rho=\sum_k p_k|\psi_k\rangle\langle\psi_k|$ is an optimal pure-state decomposition of $\rho$ for the geometric coherence $C_{\mathrm{g}}(\rho)$, that is, $C_{\mathrm{g}}=\sum_k p_k C_{\mathrm{g}}(|\psi_k\rangle)$. 
For such pure-state decomposition, we have
\begin{eqnarray}\label{eq cf cg rho}
    C_F^2(\rho)
\leq \sum_k p_k C_F^2(|\psi_k\rangle)=\sum_k p_k C_{\mathrm{g}}(|\psi_k\rangle)=  C_{\mathrm{g}}(\rho),
\end{eqnarray}
where we have used the convexity of the square function for the first inequality, and the middle equality is by Eq. (\ref{eq cf cg}). 
By virtue of Eq. (\ref{eq cf cg rho}), we have $1-\sqrt{1-C_F^2(\rho)}\leq 1-\sqrt{1-C_{\mathrm{g}}(\rho)}$. Together with Theorem \ref{cg}, we obtain $1-\sqrt{1-C_F^2(\rho)}\leq C_{\rm diag}^F(\rho)$. Therefore we derive the first inequality in Eq. (\ref{eq cf cdf}).
This completes the proof.
\end{proof}

Additionally, we know these three coherence measures we just discussed satisfy the relation  $C_{\mathrm{diag}}^{F}(\rho) \leq C_F(\rho)\leq  C_{\mathrm{g}}(\rho)$ for any quantum state $\rho$ by Eq. (\ref{eq cf cg rho}).
In fact, based on Theorem \ref{th qubit eq}, the relationship between the coherence measure $C_{\mathrm{diag}}^{F}(\rho)$ and  the \( l_1 \) norm coherence can also be established.
  
\begin{theorem}
\label{q bound}
   For any qubit state $\rho$, the coherence measure $C_{\mathrm{diag}}^F (\rho)$ and the \( l_1 \) norm coherence \( C_{l_1}(\rho) \) satisfy 
   \begin{equation}
    1 - \frac{\sqrt{2}}{2}\left[2 - \frac{C_{l_1}^2(\rho)}{2}\right]^{\frac{1}{2}} \leq C_{\mathrm{diag}}^{F}(\rho) \leq 1 - \frac{\sqrt{2}}{2}\left[2 - C_{l_1}^2(\rho)\right]^{\frac{1}{2}},
    \end{equation}
    where the first equality holds if and only if $\rho $ is incoherent and  the second equality holds if and only if $\rho $ is pure. 
\end{theorem}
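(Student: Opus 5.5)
I will start from the closed form in Theorem \ref{th qubit eq} and rewrite everything in terms of the Bloch vector. For a qubit state the $l_1$ norm coherence is $C_{l_1}(\rho)=2|\rho_{01}|=\sqrt{r_1^2+r_2^2}$. Write $c=C_{l_1}(\rho)$ and $a=1-r_3^2$, so that $c^2\le a\le 1$ and $1-|\boldsymbol r|^2=a-c^2\ge 0$. Theorem \ref{th qubit eq} then reads
\begin{equation*}
C_{\mathrm{diag}}^{F}(\rho)=1-\frac{\sqrt{2}}{2}\left[\sqrt{a(a-c^2)}+2-a\right]^{\frac12}.
\end{equation*}
Since $x\mapsto 1-\frac{\sqrt2}{2}\sqrt{x}$ is decreasing, both inequalities reduce to two-sided bounds on $g(a)=\sqrt{a(a-c^2)}+2-a$ for fixed $c$ and $a\in[c^2,1]$. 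Concretely, I need to show $2-c^2\le g(a)\le 2-c^2/2$.

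For the upper bound on $C_{\mathrm{diag}}^F$, the required inequality $g(a)\ge 2-c^2$ is equivalent to $\sqrt{a(a-c^2)}\ge a-c^2$. This holds because $a\ge a-c^2\ge 0$. Equality forces $a-c^2=0$ (or $a=a-c^2$, i.e. $c=0$, which then again needs $a=c^2$ for squares to match unless $a=0$). Either way equality means $1-|\boldsymbol r|^2=0$, i.e. $\rho$ is pure. This part is also consistent with Theorem \ref{properties}.

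For the lower bound on $C_{\mathrm{diag}}^F$, I need $g(a)\le 2-c^2/2$, i.e. $\sqrt{a(a-c^2)}\le a-c^2/2$. The right side is nonnegative, so squaring gives $a^2-ac^2\le a^2-ac^2+c^4/4$, which is always true. Equality holds iff $c=0$, i.e. $r_1=r_2=0$, which means $\rho$ is incoherent. (So the statement's ``first equality'' refers to the left inequality.)

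The only delicate point is the equality analysis in the pure case. I must check the edge case $c=0$ carefully: then $g(a)=2$, and the upper bound is attained for all $a$. So for $c=0$ (incoherent states) both bounds are equalities. This means the ``if and only if pure'' claim needs the convention that incoherent mixed states also saturate it, or else it must be restricted to coherent states. I would state the equality condition as $|\boldsymbol r|=1$ or $c=0$ and remark on this. Everything else is routine algebra.
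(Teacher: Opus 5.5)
Your proof is correct and follows the same lines as the paper's. Both arguments start from the closed form of Theorem~\ref{th qubit eq} with $C_{l_1}(\rho)=\sqrt{r_1^2+r_2^2}$. Your squaring step for the left inequality is exactly the AM--GM step $\sqrt{(1-|\boldsymbol{r}|^2)(1-r_3^2)}\le\frac{1}{2}[(1-|\boldsymbol{r}|^2)+(1-r_3^2)]$ used in the paper.

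The right inequality is where you differ. The paper fixes $r_1^2+r_2^2$, treats the bracket as a function $g(t)$ of $t=r_3^2$, shows $g'(t)\le 0$, and compares with the endpoint $t=1-r_1^2-r_2^2$, the pure state with the same $C_{l_1}$. You instead observe that $\sqrt{a(a-c^2)}\ge a-c^2$ because $a\ge a-c^2\ge 0$. That is shorter, and the identity $a(a-c^2)-(a-c^2)^2=c^2(a-c^2)$ gives the exact equality set at once.

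On that equality set you are right and the statement is not. The right inequality is saturated if and only if $c^2(a-c^2)=0$, i.e.\ $\rho$ is pure or incoherent; for $\rho=\mathbb{I}/2$ both sides equal $0$. The paper's proof only establishes the ``if'' direction (``the equality holds when $|\boldsymbol{r}|^2=1$''). In its argument $g$ is constant when $r_1^2+r_2^2=0$, which is exactly the case its claimed ``only if'' misses.

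Do tidy your third paragraph. The parenthetical saying that $c=0$ ``again needs $a=c^2$'' and the sentence ``Either way equality means $1-|\boldsymbol{r}|^2=0$'' contradict your own final paragraph. Replace them with the factorization $c^2(a-c^2)=0$ above.
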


\begin{proof}
  For any qubit state $\rho = \frac{1}{2}(\mathbb{I} + \boldsymbol{r} \cdot \boldsymbol{\sigma})$ with $ \boldsymbol{r}=(r_1, r_2, r_3)^T$,  \(\vert \boldsymbol{r} \vert = \sqrt{r_{1}^{2} + r_{2}^{2} + r_{3}^{2}} \leq 1\), by calculation we get its  \( l_1 \) norm coherence is $C_{l_1}(\rho) = \sqrt{r_1^2 + r_2^2}$ \cite{Baumgratz2014}. On one hand, we have
        \begin{align*}
       & \sqrt{(1 - \vert \boldsymbol{r} \vert^2)(1 - r_3^2)} + (1 + r_3^2)\\
        &\leq \frac{(1 - \vert \boldsymbol{r} \vert^2) + (1 - r_3^2)}{2} + (1 + r_3^2) 
        \\& = 2 - \frac{r_1^2 + r_2^2}{2},
        \end{align*}
        where the  inequality follows from the arithmetic-geometric mean inequality. The equality holds if and only if \(r_1^2 + r_2^2 = 0\). Therefore, we obtain $1 - \frac{\sqrt{2}}{2} \left[ 2 - \frac{C_{l_1}^2(\rho)}{2} \right]^{\frac{1}{2}} \leq C_{\mathrm{diag}}^{F}(\rho)$ by Eq. \eqref{eq:diag_coherence} in Theorem \ref{th qubit eq}. 
        
    On the other hand, let the function \( g(t) = \sqrt{(1 - r_1^2 - r_2^2 - t)(1 - t)}  + (1 + t)\) with \( t \in [0, 1 - r_1^2 - r_2^2] \), its derivative 
      %  \(
    %    g'(t) = \frac{-(2 - r_1^2 - r_2^2 - 2t) + 2\sqrt{(1 - r_1^2 - r_2^2 - t)(1 - t)}}{2\sqrt{(1 - r_1^2 - r_2^2 - t)(1 - t)}}
%        \)
%  is non-positive, 
$g'(t) \leq 0$, which means
       \( g(t) \) is a monotonically decreasing function. Then it follows that $g(r_3^2) \geq g( 1 - r_1^2 - r_2^2)$, that is, \( \sqrt{(1 - \vert \boldsymbol{r} \vert^2)(1 - r_3^2)} + (1 + r_3^2) \geq 2 - r_1^2 - r_2^2 \).  We finally obtain \( C_{\mathrm{diag}}^{F}(\rho) \leq 1 - \frac{\sqrt{2}}{2} \left[ 2 - C_{l_1}^2(\rho) \right]^\frac{1}{2} \), and the equality holds when \( \vert \boldsymbol{r} \vert^2 = 1 \).
\end{proof}

Up to now we have already known that
$C_{\mathrm{diag}}^{F}$ is a well-defined coherence measure under the SIO framework. In fact, this quantity has been proposed  as a coherence monotone in \cite{Yadin2016}. But whether $C_{\mathrm{diag}}^{F}$ remains monotonic under IOs is still unknown. Here we provide a rigorous proof that $C_{\mathrm{diag}}^{F}$ is definitely monotonic under IOs in qubit systems. 

\begin{theorem}  \label{th IO}
In qubit systems, $C_{\mathrm{diag}}^{F}$ is monotonic under the IOs.
\end{theorem}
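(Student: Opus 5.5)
The plan is to reduce monotonicity under a general qubit IO to two facts already used in the paper: monotonicity of fidelity under CPTP maps, and the special structure of qubit incoherent Kraus operators. We must show $C_{\mathrm{diag}}^{F}(\Lambda(\rho))\le C_{\mathrm{diag}}^{F}(\rho)$, i.e. $F(\Lambda(\rho),\Delta(\Lambda(\rho)))\ge F(\rho,\Delta(\rho))$, for every qubit IO $\Lambda(\cdot)=\sum_k K_k(\cdot)K_k^\dagger$.

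\textbf{Step 1: classify the Kraus operators.} A $2\times 2$ incoherent Kraus operator has at most one nonzero entry in each column. Hence it is of one of two types:
\begin{itemize}
\item \emph{SIO-type}: diagonal or anti-diagonal, so that $\Delta(K\rho K^\dagger)=K\Delta(\rho)K^\dagger$;
\item \emph{collapsing type}: $K=|m\rangle\langle v|$ with $m\in\{0,1\}$ and $|v\rangle=a|0\rangle+b|1\rangle$.
\end{itemize}
A collapsing operator sends every state to a multiple of $|m\rangle\langle m|$, which is incoherent. Split the index set into $S$ (SIO-type) and $T$ (collapsing type). Then
\[
\Delta(\Lambda(\rho))=\sum_{k\in S}K_k\Delta(\rho)K_k^\dagger+\sum_{j\in T}\langle v_j|\rho|v_j\rangle\,|m_j\rangle\langle m_j| .
\]

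\textbf{Step 2: lower-bound the output fidelity term by term.} By superadditivity of fidelity for positive operators, $F(\sum_k A_k,\sum_k B_k)\ge\sum_k F(A_k,B_k)$, we get
\[
F(\Lambda(\rho),\Delta(\Lambda(\rho)))\ \ge\ \sum_{k\in S}F\bigl(K_k\rho K_k^\dagger,K_k\Delta(\rho)K_k^\dagger\bigr)+\sum_{j\in T}\langle v_j|\rho|v_j\rangle ,
\]
because for $j\in T$ the two arguments coincide.

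\textbf{Step 3: compare with the flagged channel.} Apply the flagged CPTP map $\rho\mapsto\bigoplus_k K_k\rho K_k^\dagger$ to both $\rho$ and $\Delta(\rho)$. Monotonicity of fidelity gives
\[
F(\rho,\Delta(\rho))\ \le\ \sum_{k\in S}F\bigl(K_k\rho K_k^\dagger,K_k\Delta(\rho)K_k^\dagger\bigr)+\sum_{j\in T}\sqrt{\langle v_j|\rho|v_j\rangle\,\langle v_j|\Delta(\rho)|v_j\rangle}.
\]
Comparing with Step 2, it suffices to prove
\[
\sum_{j\in T}\sqrt{\langle v_j|\rho|v_j\rangle\langle v_j|\Delta(\rho)|v_j\rangle}\ \le\ \sum_{j\in T}\langle v_j|\rho|v_j\rangle .
\]

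\textbf{Step 4 (main obstacle).} The inequality of Step 3 fails term by term, since $\langle v|\Delta(\rho)|v\rangle$ can exceed $\langle v|\rho|v\rangle$. It must instead come from the completeness relation $\sum_k K_k^\dagger K_k=\mathbb{I}$, which constrains the $v_j$ jointly with the SIO-type operators. I would first try to remove this difficulty by absorbing the collapsing operators. Group the collapsing operators by target $m$: their total contribution is $\langle m|\cdot|m\rangle$-weighted, namely $\bigl(\sum_{j:m_j=m}\langle v_j|\rho|v_j\rangle\bigr)|m\rangle\langle m|$. I would then check whether the operator $\sum_{j:m_j=m}|v_j\rangle\langle v_j|$ can be rewritten, up to adding SIO-type pieces, as a sum of diagonal terms. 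If that rewriting exists, $\Lambda$ coincides on the relevant quantities with an SIO, and Theorem~\ref{th mea d} finishes the proof.

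\textbf{Fallback: direct Bloch-vector argument.} If the absorption route fails, I would work with the known Bloch-vector description of qubit IOs. Every qubit IO maps $(r_1,r_2,r_3)$ to an output whose transverse part $r_\perp'=\sqrt{r_1'^2+r_2'^2}$ and population $r_3'$ satisfy explicit constraints, e.g. $r_\perp'\le r_\perp$ together with the attainable region for $(r_\perp',r_3')$. I would substitute these into the closed form of Theorem~\ref{th qubit eq} and show that $1-\frac{\sqrt2}{2}\bigl[\sqrt{(1-r_\perp^2-r_3^2)(1-r_3^2)}+1+r_3^2\bigr]^{1/2}$ is non-increasing along the allowed transformations. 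The key calculus facts would be monotonicity in $r_\perp$ at fixed $r_3$, and in $|r_3|$ at fixed $|\boldsymbol r|$, as already used around Theorem~\ref{q bound}. The extremal IOs would be checked explicitly.
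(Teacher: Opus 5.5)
\textbf{There is a genuine gap at Step 4.} Your Steps 1--3 are correct: the classification of Kraus operators, the decomposition of $\Delta(\Lambda(\rho))$, superadditivity of $F$, and data processing under the flagged channel. But the proposal stops exactly where the work lies. Write $a_j=\langle v_j|\rho|v_j\rangle$ and $b_j=\langle v_j|\Delta(\rho)|v_j\rangle$. The inequality $\sum_{j\in T}\sqrt{a_jb_j}\le\sum_{j\in T}a_j$ is never proved, and neither of your escape routes works as written.

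The absorption route is only conditional, and in the form you state it, it fails. Take the IO with $K_1=|0\rangle\langle +|$ and $K_2=|1\rangle\langle -|$. The per-target sums $|\pm\rangle\langle\pm|$ are not diagonal, and adding SIO-type pieces cannot change off-diagonal entries. Moreover $\Lambda$ is not an SIO, since $\Lambda(\Delta(\rho))=\mathbb{I}/2$ whereas $\Delta(\Lambda(\rho))=\mathrm{diag}(\langle+|\rho|+\rangle,\langle-|\rho|-\rangle)$. A per-state version (that the particular output $\Lambda(\rho)$ is reachable by some SIO) would suffice, but that is itself a convertibility theorem you neither prove nor cite.

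The fallback lacks the decisive input. With $x=\rho_{11}\rho_{22}$ and $y=|\rho_{12}|^2$ one has $1-F^2(\rho,\Delta(\rho))=2y/(1+\sqrt{1-y/x})$. At fixed $r_\perp$ this \emph{increases} with $|r_3|$. Hence $r_\perp'\le r_\perp$ together with the monotonicities you list cannot suffice. For example, the state with $(r_\perp,r_3)=(0.5,0)$ has $C_{\mathrm{diag}}^F\approx0.034$, while the pure state with $r_\perp'=0.49$ has $C_{\mathrm{diag}}^F\approx0.062$. What is needed is the second qubit IO convertibility condition $y'/x'\le y/x$, besides $y'\le y$. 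Combined with the monotonicity of $2y/(1+\sqrt{1-y/x})$ in $y$ and in $y/x$, that is exactly the paper's proof.

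\textbf{How to close your main route.} Every SIO-type qubit Kraus operator is diagonal or anti-diagonal, so $K^\dagger K$ is diagonal. Completeness then forces $V_T:=\sum_{j\in T}|v_j\rangle\langle v_j|=\mathbb{I}-\sum_{k\in S}K_k^\dagger K_k$ to be diagonal. Therefore $\sum_{j\in T}b_j=\mathrm{Tr}(\Delta(\rho)V_T)=\mathrm{Tr}(\rho V_T)=\sum_{j\in T}a_j$. Cauchy--Schwarz now gives $\sum_{j\in T}\sqrt{a_jb_j}\le\bigl(\sum_{j\in T}a_j\bigr)^{1/2}\bigl(\sum_{j\in T}b_j\bigr)^{1/2}=\sum_{j\in T}a_j$.

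With this inserted, your argument becomes a complete proof that genuinely differs from the paper's. It works at the level of Kraus operators and uses only standard fidelity properties, with no external convertibility theorem. The paper instead reduces everything to the two scalars $x,y$ and relies on the cited necessary-and-sufficient condition for qubit IO conversion. Your route is qubit-specific only through Step 1: for $d\ge3$ there are IO Kraus operators, such as $|0\rangle(\langle0|+\langle1|)/\sqrt2+|1\rangle\langle2|$, that are neither SIO-type nor rank one.
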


        \begin{proof}
    For any two qubit states \(\rho=(\rho_{ij})_{2\times 2}\) and \(\sigma=(\sigma_{ij})_{2\times 2}\). Using Eq. \eqref{eq qubit cdiag}, we derive that
        \begin{equation*}
            F(\rho, \Delta(\rho)) = \sqrt{1 - 2x_1 + 2\sqrt{x_1(x_1 - y_1)}}
        \end{equation*}
        and
        \begin{equation*}
            F(\sigma, \Delta(\sigma)) = \sqrt{ 1 - 2x_2 + 2\sqrt{x_2(x_2 - y_2)}}, 
        \end{equation*}
   with
        \( x_1 = \rho_{11}\rho_{22} \), \( y_1 = |\rho_{12}|^2 \), \( x_2 = \sigma_{11}\sigma_{22} \), and \( y_2 = |\sigma_{12}|^2 \). 
        By the positivity of the density matrix, we know
        \begin{equation}
             0 \leq x_1, x_2 \leq \frac{1}{4},\ \  x_1 \geq y_1,\ \   x_2 \geq y_2.
        \end{equation}
Suppose the quantum state $\rho$ can be transformed into $\sigma$ under IOs. Next we prove $C_{\mathrm{diag}}^{F}(\rho)\geq C_{\mathrm{diag}}^{F}(\sigma)$.

First, if $y_2=0$, then the quantum state $\sigma$ is incoherent and $C_{\mathrm{diag}}^{F}(\sigma)=0$. Therefore it is obvious that $C_{\mathrm{diag}}^{F}(\rho)\geq C_{\mathrm{diag}}^{F}(\sigma)$. 
%In this case, the coherence measure $C_{\mathrm{diag}}^{F}$ is obviously monotonic under IOs.

Second,  if $y_2> 0$, the quantum state $\rho$ can be transformed into $\sigma$ under IOs if and only if  \cite{Streltsov2017b,Cui2024}
\begin{equation}\label{eq cons}
  y_1 \geq y_2  \ \ \text {and}\ \  \frac{x_2}{x_1} \geq \frac{y_2}{y_1}.
\end{equation}
This implies that $y_1> 0$. By the positivity of density matrices, we have $x_1> 0$ and $x_2> 0$.
Let \(r_i = \frac{y_i}{x_i}\), \(i=1,2\). Then Eq.(\ref{eq cons}) can be rephrased as 
\begin{equation}\label{eq r}
r_1 \geq r_2 \ \  \text {and}\ \  \frac{r_2}{r_1} \leq \frac{x_1}{x_2}. 
\end{equation}
By the monotonicity of the function
$\eta(r_i) = \frac{1 - \sqrt{1 - r_i}}{r_i}$ on \((0, 1]\), it immediately follows that 
\begin{equation}\label{eq r2}
\frac{1 - \sqrt{1 -r_2}}{1 - \sqrt{1 -r_1}} \leq \frac{r_2}{r_1}.
\end{equation}
By Eq. (\ref{eq r}), this
 yields
\begin{equation*}
    x_1 (1 - \sqrt{1 -r_1})  \geq x_2(1 - \sqrt{1 -r_2}),
\end{equation*}
which is equivalent to 
\begin{equation*}
F(\rho, \Delta(\rho)) \leq F(\sigma, \Delta(\sigma)) . 
\end{equation*}
Thus we derive $C_{\mathrm{diag}}^{F}(\sigma) \leq C_{\mathrm{diag}}^{F}(\rho)$. Therefore $C_{\mathrm{diag}}^{F}$ is monotonic under the IOs in qubit systems.     
        \end{proof}

Since $C_{\mathrm{diag}}^{F}$ is nonnegative and additive under direct sum states, combined with its monotonicity under IOs, we know $C_{\mathrm{diag}}^{F}$ can serve as a coherence measure in the IO framework in qubit systems.

\begin{corollary}
    $C_{\mathrm{diag}}^{F}$ is a well-defined coherence measure in the IO framework for qubit states.
\end{corollary}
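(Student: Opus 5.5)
The plan is to verify the coherence-measure axioms in the IO framework for qubit states: (C1), (C2) with respect to IOs, and (C5). Then I would invoke the equivalence result of \cite{Yu2016}, now in its original IO setting, to obtain (C3) and (C4).

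Faithfulness (C1) is already available from Theorem \ref{th mea d}, since it does not depend on the class of free operations. Monotonicity (C2) under IOs is exactly Theorem \ref{th IO}.

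For additivity (C5) I would use the computation in the proof of Theorem \ref{th mea d}, which relies only on property (D3) of $D(\rho,\sigma)=1-F(\rho,\sigma)$. Here it is worth checking (D3) directly. If $\rho=p_1\rho_1\oplus p_2\rho_2$ and $\sigma=p_1\sigma_1\oplus p_2\sigma_2$, then $\sqrt{\rho}\sigma\sqrt{\rho}$ is block diagonal. This gives $F(\rho,\sigma)=p_1F(\rho_1,\sigma_1)+p_2F(\rho_2,\sigma_2)$, and since $p_1+p_2=1$, also $1-F(\rho,\sigma)=p_1(1-F_1)+p_2(1-F_2)$.

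One subtlety concerns qubits. In a qubit, a direct sum of two nontrivial blocks is only possible with one-dimensional blocks, so $\rho_1,\rho_2$ are incoherent and both sides vanish. Alternatively, the direct sum lives in a higher-dimensional space. I would state that, in line with the usual usage, (C5) holds in any case by the general argument, while (C2) is only needed on qubits.

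The last step combines (C2) under IOs with (C5) to derive (C3) and (C4), following the argument of \cite{Yu2016}. Their construction appends a flag system and uses the IO that maps $\rho\mapsto\bigoplus_\mu K_\mu\rho K_\mu^\dagger$. I expect this to be the main obstacle: that flagged map leaves the qubit space, so Theorem \ref{th IO} alone does not cover it.

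To handle it, I would first try to prove (C3) directly on qubits. Each incoherent Kraus operator $K_\mu$ of a qubit IO either is strictly incoherent or maps everything onto a single basis state. In the latter case the output is incoherent and contributes zero. The remaining terms then satisfy probabilistic monotonicity by Theorem \ref{th mea d}, because those Kraus operators are SIO-type.

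Convexity (C4) follows independently from the joint concavity of the fidelity, $F(\sum_j p_j\rho_j,\sum_j p_j\Delta(\rho_j))\geq\sum_j p_jF(\rho_j,\Delta(\rho_j))$, together with the linearity of $\Delta$.
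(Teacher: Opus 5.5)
Your proof is correct, and it takes a different and more careful route than the paper. The paper proves the corollary in one sentence: nonnegativity and direct-sum additivity are combined with the qubit IO monotonicity of Theorem~\ref{th IO}, and the Yu et al.\ equivalence is implicitly relied on to supply (C3) and (C4).

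You rightly observe that this equivalence needs (C2) for the flagged operation $\rho\mapsto\sum_\mu K_\mu\rho K_\mu^\dagger\otimes|\mu\rangle\langle\mu|$, which leaves the qubit space. Theorem~\ref{th IO}, however, is proved only for qubit-to-qubit conversions through the explicit convertibility criterion. So the paper's one-line argument leaves exactly this step unjustified, and your direct argument fills it.

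Your key observation is that every incoherent Kraus operator on a qubit is either SIO-type or of the form $|j\rangle\langle v|$, whose output is incoherent. This reduces (C3) to the SIO result of Theorem~\ref{th mea d}, but you should add one detail. The SIO-type subfamily $\{K_\mu\}_{\mu\in S}$ is not trace preserving, so SIO probabilistic monotonicity does not apply to it directly. Complete it with $\sqrt{c_0}\,|0\rangle\langle 0|$ and $\sqrt{c_1}\,|1\rangle\langle 1|$, where $\mathrm{diag}(c_0,c_1)=I-\sum_{\mu\in S}K_\mu^\dagger K_\mu$. This matrix is diagonal because each SIO-type $K_\mu^\dagger K_\mu$ is diagonal, and it is positive because it equals $\sum_{\mu\notin S}K_\mu^\dagger K_\mu$. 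The added branches have incoherent outputs and contribute zero, and only after this completion does the SIO result give $\sum_{\mu\in S}p_\mu C_{\mathrm{diag}}^{F}(\rho_\mu)\le C_{\mathrm{diag}}^{F}(\rho)$, with $p_\mu\rho_\mu=K_\mu\rho K_\mu^\dagger$.

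Your route also gives something extra. Combining (C3) with (C4) from joint concavity of $F$ yields $C_{\mathrm{diag}}^{F}(\Lambda(\rho))\le\sum_\mu p_\mu C_{\mathrm{diag}}^{F}(\rho_\mu)\le C_{\mathrm{diag}}^{F}(\rho)$. This reproves Theorem~\ref{th IO} without the convertibility conditions, and it even covers IOs from a qubit into a larger output space, since the Kraus dichotomy only uses that the input is two-dimensional. The paper's route is shorter, but it depends on Theorem~\ref{th IO} together with an ancilla step it never checks.
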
  

%Compared with the coherence measures based on fidelity  \cite{Baumgratz2014,Streltsov2015,Liu2017,Li2019}, the advantage of the coherence measure $C_{\mathrm{diag}}^{F}$ we proposed here is the absence of optimization. 
%This makes the calculation and the application more feasible.
%Moreover, this coherence measure displays the essential role of the diagonal state $\Delta(\rho)$ in characterizing the coherence.
%The disadvantage of $C_{\mathrm{diag}}^{F}$ is that it is a reasonable coherence measure in the  IO framework only in qubit systems, and its monotonicity under the SIOs is still unknown in high dimensional systems.

\section{The application in multi-path interferometer} 
\label{interferometer}

In the multi-path interferometer, we denote \( |i\rangle \) as the \( i \)-th path, with \( i = 1, 2, \dots, d \).
Any quantum state $\rho$ in the interferometer can be expressed in the path basis $\{|i\rangle\}_{i=1}^{d}$ as  $\rho = \sum_{i,j=1}^{d} \langle i|\rho|j\rangle |i\rangle \langle j| $. 
To characterize the visibility of the interference fringes caused by the waveness,
the wave measure denoted by $\mathcal{W}(\rho)$ is introduced, which is a nonnegative function and satisfies the following conditions \cite{durr}:

   \begin{enumerate}[(W1)] 
   \item $\mathcal{W}(\rho)$ attains its global minimum
   when $\rho$ is diagonal in the path basis, i.e.,
   $\rho=\sum_{i=1}^{d} p_i |i\rangle\langle i|$. 
    \item $\mathcal{W}(\rho)$ attains its global maximum {when $\rho$ is pure state with equal diagonal entries i.e.  $\langle i|\rho|i\rangle = \frac{1}{d}$ for all $i$ }.
    \item $\mathcal{W}(\rho)$ is invariant under the permutations of the path labels.
    \item $\mathcal{W}(\rho)$ is convex.
\end{enumerate}
Dually, to characterize the path information in the interferometer, 
the particle measure denoted by $\mathcal{P}(\rho)$ is introduced which is a nonnegative function and satisfies the following conditions \cite{durr}:
\begin{enumerate}[(P1)] 
    \item  $\mathcal{P}(\rho)$ attains its global maximum when $\langle i|\rho|i\rangle = 1$ for some  $i$. 
    \item  $\mathcal{P}(\rho)$ attains its global minimum when $\langle i|\rho|i\rangle =  \frac{1}{d}$ for all $i$. 
    \item  $\mathcal{P}(\rho)$ is invariant under the permutations of path labels.
    \item  $\mathcal{P}(\rho)$ is convex.
\end{enumerate}

In fact, the mutual exclusivity between waveness and particleness is reflected by the requirements of the wave measure and particle measure. When the waveness attains the maximum,  the particleness reaches the minimum. This is the case that the state is equally likely to occur across all possible paths. Conversely, when the particleness attains the maximum, then the waveness reaches the minimum. This is the case that a path is traversed with certainty and the systems show no interference.

By the analysis of elementary process in the interferometer \cite{Yadin2016}, we know the operations acting on a particle travelling through an interferometer are all SIOs. Indeed, it can be verified that the coherence measures in the SIO framework all satisfy the conditions required for the wave measure (W1)-(W4). 
Therefore, to quantify the waveness, we can adopt the coherence measure in the SIO framework which extends the method proposed in \cite{Bai2025}.
In particular, 
\begin{equation}
    W(\rho)=C_{\mathrm{diag}}^{F}(\rho),
\end{equation}
can be viewed as the wave measure. Correspondingly, 
we define 
\begin{equation}\label{eq P}
    P(\rho) = 1 - F(\Delta(\rho), \frac{I}{d}),
\end{equation}
as the particle measure.
Indeed, for any quantum state $\rho=\sum_{i,j} \rho_{ij} |i\rangle\langle j|$, 
$P(\rho)$ in Eq. (\ref{eq P}) can be rewritten as 
\begin{equation}
    P(\rho) = 1 - \frac{1}{\sqrt{d}}\sum_{i=1}^d \sqrt{\rho_{ii}}.
\end{equation}
It can be verified that $0\leq P(\rho) \leq 1 -  \frac{1}{\sqrt{d}}.$
The minimum is attained, $P(\rho)=0$, if and only if $\rho_{ii} = \frac{1}{d}$ for all $i$. The maximum is attained, $P(\rho) = 1 -  \frac{1}{\sqrt{d}}$, if and only if $\rho_{ii} = 1$ for some $i$. 
Moreover, \( P(\rho) \) is invariant under permutations of the path labels, and its convexity is guaranteed by the concavity of the fidelity \( F(\rho) \). Therefore, $P(\rho) $ characterizes the particle behavior of $\rho$ in the multi-path interferometer. 

\begin{theorem}\label{th 2 duality}
For any quantum state $\rho$ in two-path interferometer, the wave  measure $W(\rho)$ and the particle measure $P(\rho)$ satisfy 
\begin{equation}\label{eq dualtiy}
    W(\rho) + P(\rho) \leq 1-\frac{1}{\sqrt{2}}.
    \end{equation}
The equality holds if and only if $\rho$ is classical, i.e., $|1\rangle\langle 1|$, $|2\rangle\langle 2|$, or is maximally coherent state $\frac{1}{\sqrt{2}}(|1\rangle + e^{{\rm i}\theta}|2\rangle)$ with arbitrary phase $\theta$.
\end{theorem}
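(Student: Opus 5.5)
Our plan is to reduce everything to the qubit case and then to pure states. For a qubit $\rho$ with diagonal entries $\rho_{11}=p$, $\rho_{22}=1-p$, the particle measure is
\[
P(\rho)=1-\tfrac{1}{\sqrt2}\bigl(\sqrt p+\sqrt{1-p}\bigr),
\]
which depends only on $p$. By Theorem \ref{properties}, $W(\rho)=C^F_{\mathrm{diag}}(\rho)\le 1-\sqrt{p^2+(1-p)^2}$, with equality if and only if $\rho$ is pure. It therefore suffices to prove, for every $p\in[0,1]$,
\[
2-\sqrt{p^2+(1-p)^2}-\tfrac{1}{\sqrt2}\bigl(\sqrt p+\sqrt{1-p}\bigr)\le 1-\tfrac1{\sqrt2},
\]
that is,
\[
h(p):=\sqrt{p^2+(1-p)^2}+\tfrac{1}{\sqrt2}\bigl(\sqrt p+\sqrt{1-p}\bigr)\ \ge\ 1+\tfrac{1}{\sqrt2}.
\]

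To analyse $h$, we set $s=\sqrt p+\sqrt{1-p}\in[1,\sqrt2]$. Then $\sqrt{p(1-p)}=(s^2-1)/2$, so $p(1-p)=(s^2-1)^2/4$. Hence $p^2+(1-p)^2=1-2p(1-p)=1-(s^2-1)^2/2$. The inequality becomes a one-variable statement,
\[
\phi(s):=\sqrt{1-\tfrac{(s^2-1)^2}{2}}+\tfrac{s}{\sqrt2}\ \ge\ 1+\tfrac1{\sqrt2},\qquad s\in[1,\sqrt2].
\]
At the endpoints, $\phi(1)=1+1/\sqrt2$ and $\phi(\sqrt2)=\sqrt{1/2}+1$, so equality holds at both ends. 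The endpoints correspond to $p\in\{0,1\}$ and to $p=1/2$ respectively.

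We then show $\phi$ is concave on $[1,\sqrt2]$. The function $u=(s^2-1)^2/2$ is convex and increasing there, and $\sqrt{1-u}$ is concave and decreasing in $u$, so the composition is concave. A concave function lies above its chord, and since the chord is the constant $1+1/\sqrt2$, we get $\phi\ge 1+1/\sqrt2$. Strict concavity, coming from the strictly convex $u$, gives strict inequality in the interior. So equality in the reduced inequality requires $p\in\{0,\tfrac12,1\}$. This concavity check is the main point to verify carefully, though it appears routine.

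Finally we treat the equality case. Equality in \eqref{eq dualtiy} needs both equality in Theorem \ref{properties}, so $\rho$ is pure, and $p\in\{0,1/2,1\}$. A pure state with $p\in\{0,1\}$ is $|1\rangle\langle1|$ or $|2\rangle\langle2|$. A pure state with $p=1/2$ is $\frac1{\sqrt2}(|1\rangle+e^{{\rm i}\theta}|2\rangle)$. Conversely, direct substitution shows that these states attain $1-1/\sqrt2$, which completes the characterization.
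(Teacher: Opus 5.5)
Your argument is correct, and its analytic core is the paper's. With $t=\sqrt{p(1-p)}=(s^2-1)/2$ (the paper's $t=|ab|$), your $\phi(s)$ is exactly the paper's $f(t)=\sqrt{1-2t^2}+\sqrt{\tfrac12+t}$. Both proofs then conclude from strict concavity plus equal endpoint values $1+\tfrac{1}{\sqrt2}$.

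Where you differ is the reduction step. The paper uses convexity of $W$ and $P$ to pass to pure states. That gives the inequality, but it does not by itself rule out a mixed state attaining the bound, so the ``only if'' direction for mixed $\rho$ is left implicit. You instead observe that $P$ and the bound of Theorem~\ref{properties} both depend only on the diagonal, and that this bound is strict for every mixed state. So mixed states are excluded directly and your equality characterization is complete.

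The cost is that you rely on the equality clause of Theorem~\ref{properties}, which the paper states tersely. It is valid: $\mathrm{Tr}\sqrt{A}=\sqrt{\mathrm{Tr}A}$ iff $\mathrm{rank}A\le 1$, and $\mathrm{rank}\bigl(\sqrt\rho\,\Delta(\rho)\sqrt\rho\bigr)=\mathrm{rank}\,\rho$ because the support of $\rho$ lies inside that of $\Delta(\rho)$.

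A small remark: your composition argument does not need $u$ to be increasing. It only needs $u$ convex and $\sqrt{1-u}$ concave and strictly decreasing.
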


\begin{proof}
Since both $W(\rho)$ and  $P(\rho)$ are convex, so in order to find the maximum of the sum $W(\rho)+ P(\rho)$, we only need to consider the case of pure states. For any pure qubit state \( |\psi\rangle = a|1\rangle + b|2\rangle \), where \( |a|^2 + |b|^2 = 1 \). 
%Let \( x = |a| \), \( y = |b| \), 
By definition and direct calculation, we get that its particleness is  
\[
P(\rho) = 1 - \frac{|a| + |b|}{\sqrt{2}},
\]
and waveness is
\[
W(\rho) = 1 - \sqrt{|a|^4 + |b|^4},
\]
respectively.
Therefore the sum is
\begin{eqnarray*}
W(\rho) + P(\rho) &=& 2 - \sqrt{|a|^4 + |b|^4} - \frac{|a| + |b|}{\sqrt{2}}\\
&=& 2- \sqrt{1 - 2|ab|^2} -\sqrt{\frac{1}{2} + |ab|}.
\end{eqnarray*}
Let \( t = |ab| \) (\( 0 \leq t \leq \frac{1}{2} \)), and let
\[
f(t) = \sqrt{1 - 2t^2} + \sqrt{\frac{1}{2} + t}.
\]
Now we analyze the minimum of $f(t)$ at the interval $[0, \frac{1}{2}]$.
Because $f(t)$ is strictly concave about $t$ at the interval $[0, \frac{1}{2}]$, its minimum is attained at the  extreme points. Examing   these points, we get
\[
f(0) = f(\frac{1}{2}) =  1 +\frac{1}{\sqrt{2}}.
\]
Hence, for all \( t \in [0, \frac{1}{2}] \), \( f(t) \geq 1 +\frac{1}{\sqrt{2}}\), with equality when \( t = 0 \) (i.e., \( a = 0 \) or \( b = 0 \)) or \( t = \frac{1}{2} \) (i.e., \( a = b = \frac{1}{\sqrt{2}} \)).  Therefore, 
\begin{eqnarray*}
W(\rho) + P(\rho)\leq 1- \frac{1}{\sqrt{2}}
\end{eqnarray*}
and the equality holds true if and only if \(\rho\) is classical, i.e., \(\rho = |i\rangle \langle i|\) $(i=1,2)$ or maximally coherent $|\psi\rangle=\frac{1}{\sqrt{2}}(|1\rangle+e^{{\rm i}\theta}|2\rangle)$.
\end{proof}

Theorem \ref{th 2 duality} shows a tight wave-particle duality in two-path interferometer. In the multi-path interferometer, we show the waveness and particleness are restricted by the following bounds.

\begin{theorem}\label{th duality-1}
For any quantum state $\rho$ in $d$-path interferometer, the wave  measure $W(\rho)$ and the particle measure $P(\rho)$ satisfy 
\begin{equation}\label{eq dualtiy-1}
    W(\rho) + P(\rho) \leq 2 - \frac{2}{\sqrt[4]{d}}.
    \end{equation}
\end{theorem}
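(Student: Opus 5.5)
Both $W$ and $P$ are convex functions of $\rho$ (the convexity of $P$ is noted above, and $W=C_{\mathrm{diag}}^{F}$ is a coherence measure, hence convex). So $W+P$ is convex, and its maximum over the convex set of states is attained at an extreme point, i.e.\ at a pure state $|\psi\rangle=\sum_i \psi_i|i\rangle$. This is the same reduction used in the proof of Theorem \ref{th 2 duality}. Write $p_i=|\psi_i|^2$, so that $\{p_i\}$ is a probability vector. For pure states Theorem \ref{properties} holds with equality, which gives $W(|\psi\rangle)=1-\sqrt{\sum_i p_i^2}$. The explicit formula for $P$ gives $P(|\psi\rangle)=1-\frac{1}{\sqrt d}\sum_i\sqrt{p_i}$. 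Hence
\begin{equation*}
W+P = 2-\Big(\sqrt{\textstyle\sum_i p_i^2}+\tfrac{1}{\sqrt d}\textstyle\sum_i \sqrt{p_i}\Big),
\end{equation*}
and it suffices to prove that
\begin{equation*}
\sqrt{\textstyle\sum_i p_i^2}+\tfrac{1}{\sqrt d}\textstyle\sum_i \sqrt{p_i}\;\geq\; \tfrac{2}{\sqrt[4]{d}}.
\end{equation*}

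The natural tool is the AM--GM inequality $A+B\geq 2\sqrt{AB}$. With $A=\sqrt{\sum_i p_i^2}$ and $B=\frac{1}{\sqrt d}\sum_i\sqrt{p_i}$, it is then enough to show $AB\geq \frac{1}{\sqrt d}$, that is,
\begin{equation*}
\Big(\textstyle\sum_i p_i^2\Big)^{1/2}\,\textstyle\sum_i \sqrt{p_i}\;\geq\;1.
\end{equation*}
I plan to get this from Hölder's inequality, by interpolating the $\ell_1$ norm between the $\ell_{1/2}$ and $\ell_2$ quasi-norms. Write $1=\sum_i p_i=\sum_i p_i^{1/3}\,p_i^{2/3}$ and apply Hölder with exponents $3/2$ and $3$. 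This gives $\sum_i p_i\leq (\sum_i p_i^{1/2})^{2/3}(\sum_i p_i^{2})^{1/3}$. Cubing both sides yields $1\leq (\sum_i\sqrt{p_i})^2\sum_i p_i^2$, which is exactly the needed inequality.

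Combining the two steps gives $A+B\geq 2/\sqrt[4]{d}$, hence $W+P\leq 2-2/\sqrt[4]{d}$ for pure states. By the convexity reduction the same bound holds for all $\rho$.

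The main obstacle is finding the right intermediate inequality, i.e.\ choosing AM--GM followed by the Hölder interpolation with the correct exponents. The rest is routine. I would also remark that the bound is generally not tight. Equality would require $A=B$ together with equality in Hölder, which forces a flat distribution on its support. For a flat distribution on $k$ paths one has $A=1/\sqrt k$ and $B=\sqrt{k/d}$, and these coincide only when $k=\sqrt d$. So equality is possible only when $\sqrt d$ is an integer, consistent with the theorem being stated as a bound rather than an equality characterization.
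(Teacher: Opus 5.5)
Your proof is correct, but it reaches the constant $2/\sqrt[4]{d}$ by a different chain of inequalities than the paper.

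The paper never passes to pure states. It uses $F(\rho,\Delta(\rho))\geq\sqrt{\sum_i\rho_{ii}^2}$ from Theorem \ref{properties}, which holds for every state. It then applies Jensen's inequality, $\sqrt{\sum_i\rho_{ii}^2}\geq\sum_i\rho_{ii}^{3/2}$ (concavity of the square root with weights $\rho_{ii}$). Finally it uses AM--GM \emph{term by term}, $\rho_{ii}^{3/2}+d^{-1/2}\rho_{ii}^{1/2}\geq 2d^{-1/4}\rho_{ii}$, which sums to $2d^{-1/4}$.

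You instead apply AM--GM once to the two aggregate sums. You then bound their product with the H\"older interpolation $(\sum_i\sqrt{p_i})^2\sum_i p_i^2\geq 1$, and your exponents $3/2$ and $3$ are the right ones.

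Your convexity reduction is harmless but unnecessary. The inequality of Theorem \ref{properties} already holds for mixed states, so your argument runs verbatim with $p_i=\rho_{ii}$.

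What each route buys:
\begin{itemize}
\item The paper's termwise splitting produces the separable function $\sum_i(\rho_{ii}^{3/2}+d^{-1/2}\rho_{ii}^{1/2})$. This is exactly what Theorem \ref{th duality} goes on to optimize with Lagrange multipliers; your global AM--GM step does not decompose that way.
\item Both routes have the same equality conditions: a pure state that is uniform on exactly $\sqrt d$ paths. The paper's termwise AM--GM gives this condition even more directly, as $\rho_{ii}\in\{0,d^{-1/2}\}$.
\end{itemize}

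Your closing remark undersells the bound. Whenever $\sqrt d$ is an integer, the bound is actually attained. For example, with $d=4$ the state $\frac{1}{\sqrt2}(|1\rangle+|2\rangle)$ gives $W=P=1-\frac{1}{\sqrt2}$, so $W+P=2-\sqrt2=2-2/\sqrt[4]{4}$. The inequality is therefore tight for every perfect-square $d$, not ``generally not tight''.
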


%\begin{proof}For any quantum state $\rho=\sum_{i,j} \rho_{ij} |i\rangle \langle j|$ in $d$-path interferometer, according to the definition of fidelity and Theorem \ref{properties}, we get
%\begin{eqnarray*}F(\rho, \Delta(\rho)) + F\left(\Delta(\rho), \frac{I}{d}\right) &\geq& \sqrt{\displaystyle\sum_{i}\rho_{ii}^2} + \frac{1}{\sqrt{d}} \displaystyle\sum_{i}\sqrt{\rho_{ii}}\\ &\geq& \displaystyle\sum_{i} \rho_{ii} \sqrt{\rho_{ii}} + \frac{1}{\sqrt{d}} \displaystyle\sum_{i}\sqrt{\rho_{ii}}\\&\geq& \frac{2}{\sqrt[4]{d}}\displaystyle\sum_{i} \rho_{ii} \\&=& \frac{2}{\sqrt[4]{d}}, 
%\end{eqnarray*}where we have used the concavity for the second inequality and mean inequality $a + b \geq 2\sqrt{ab}$ for the third inequality. This makes the wave-particle duality $ W(\rho) + P(\rho) \leq 2 - \frac{2}{\sqrt[4]{d}}$.\end{proof}
\begin{proof}
For any quantum state $\rho=\sum_{i,j} \rho_{ij} |i\rangle \langle j|$ in $d$-path interferometer, according to the definition of fidelity and Theorem \ref{properties}, we get
\begin{eqnarray*}
    F(\rho, \Delta(\rho)) + F\left(\Delta(\rho), \frac{I}{d}\right) &\geq& \sqrt{\displaystyle\sum_{i}\rho_{ii}^2} + \frac{1}{\sqrt{d}} \displaystyle\sum_{i}\sqrt{\rho_{ii}}\\
    &\geq& \displaystyle\sum_{i} \rho_{ii} \sqrt{\rho_{ii}} + \frac{1}{\sqrt{d}} \displaystyle\sum_{i}\sqrt{\rho_{ii}}\\
     &\geq& \frac{2}{\sqrt[4]{d}}\displaystyle\sum_{i} \rho_{ii} \\
     &=& \frac{2}{\sqrt[4]{d}}, 
\end{eqnarray*}
where we have used the fact that \(\sqrt{\sum_i \rho_{ii}^2} \geq \sum_i \rho_{ii}^{\frac{3}{2}}\) by the concavity of the square-root function
for the second inequality. For the third inequality, we have used the fact that $ \rho_{ii}^{\frac{3}{2}} + \frac{1}{\sqrt{d}} \rho_{ii}^{\frac{1}{2}} \geq \frac{2}{\sqrt[4]{d}} \rho_{ii}$ by the arithmetic-geometric mean (AM-GM) inequality
$a+b\geq 2\sqrt{ab}$. 
This makes the wave-particle duality $ W(\rho) + P(\rho) \leq 2 - \frac{2}{\sqrt[4]{d}}$.
\end{proof}

Furthermore, the upper bound in Theorem \ref{th duality-1} can be improved in a finer way.

\begin{theorem}\label{th duality}
For any quantum state $\rho$ in $d$-path interferometer with $\Delta(\rho)=\sum_i \langle i|\rho |i\rangle |i\rangle \langle i|$ of rank $k$ $(k\leq d)$ under the path basis, the wave  measure $W(\rho)$ and the particle measure $P(\rho)$ satisfy 
\begin{equation}\label{eq dualtiy}
    W(\rho) + P(\rho) \leq 2 - \frac{2\sqrt{2}}{3} \sqrt{\frac{3}{\sqrt{d}} + \frac{k}{d}}.
    \end{equation}
\end{theorem}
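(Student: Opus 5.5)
The plan is to reduce the statement to an inequality about the diagonal entries alone, as in the proof of Theorem \ref{th duality-1}, and then to strengthen the AM-GM step using the rank $k$ of $\Delta(\rho)$. Write $p_i=\rho_{ii}$, so that exactly $k$ of the $p_i$ are nonzero, and set
\begin{equation*}
A=\sqrt{\sum_i p_i^2},\qquad B=\frac{1}{\sqrt d}\sum_i \sqrt{p_i}.
\end{equation*}
By Theorem \ref{properties}, $F(\rho,\Delta(\rho))\geq A$, and by definition $F(\Delta(\rho),I/d)=B$. Hence $W(\rho)+P(\rho)\leq 2-(A+B)$, and it suffices to prove
\begin{equation*}
(A+B)^2\geq \frac{8}{9}\Big(\frac{3}{\sqrt d}+\frac{k}{d}\Big).
\end{equation*}

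The available estimates on $A$ and $B$ over a support of size $k$ are as follows. First, $AB\geq \frac{1}{\sqrt d}\sum_i p_i^{3/2}\sum_i\sqrt{p_i}\geq \frac{1}{\sqrt d}$; this uses $A\geq\sum_i p_i^{3/2}$ from the proof of Theorem \ref{th duality-1} and then Cauchy--Schwarz. Second, $A^2\geq 1/k$ by Cauchy--Schwarz on the support. Third, $B^2\geq 1/d$, since $\sum_i\sqrt{p_i}\geq 1$.

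I would expand $(A+B)^2=A^2+2AB+B^2$ and split it as $\frac{8}{3}AB$ plus the remainder $A^2+B^2-\frac{2}{3}AB$. The first part already gives $\frac{8}{3\sqrt d}=\frac{8}{9}\cdot\frac{3}{\sqrt d}$. It then remains to show
\begin{equation*}
A^2+B^2-\tfrac{2}{3}AB\geq \tfrac{8}{9}\,\tfrac{k}{d},
\end{equation*}
or some rebalanced version of this split if the constants turn out to need redistributing.

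I expect this remainder estimate to be the main obstacle. The tension is that $A$ is large when the $p_i$ are concentrated, while $B$ is large when they are spread out, so neither term alone controls $k/d$. The uniform-on-support point $p_i=1/k$ gives $A^2=1/k$ and $B^2=k/d$. The concentrated points, where all the mass except for tiny amounts sits on one entry, give $A\approx 1$ and $B\approx 1/\sqrt d$. My first approach would be to minimize $A+B$ directly over the simplex restricted to the $k$-element support. A Lagrange-multiplier or symmetrization argument (both $A$ and $B$ are symmetric functions) should show that the minimizer takes at most two distinct nonzero values. This reduces the problem to a one- or two-parameter inequality that can be checked by elementary calculus against the target constant $\frac{2\sqrt2}{3}$.

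If that reduction proves messy, the fallback is a weighted AM-GM in three terms, for instance $A+\frac{B}{2}+\frac{B}{2}$. This explains the factors $\frac{2\sqrt2}{3}$ and $3$ in the bound, and each term would be paired with one of the three estimates above.
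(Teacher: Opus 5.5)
\textbf{There is a genuine gap, and it cannot be closed: the statement itself is false.} Your reduction to the diagonal is sound. The problem is the step you flag as the main obstacle. The target $(A+B)^2\ge\frac89\bigl(\frac{3}{\sqrt d}+\frac kd\bigr)$ does not hold, so no split of $(A+B)^2$, rebalanced or not, and no three-term AM--GM built from your three estimates can prove it.

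The obstruction is structural. $A+B$ is continuous in $(p_i)$, and diagonals with exactly $k$ positive entries accumulate on every diagonal with fewer positive entries. Hence the infimum of $A+B$ at rank exactly $k$ equals its minimum over rank at most $k$, which is non-increasing in $k$. A lower bound that grows with $k$ therefore cannot hold in general. Note that your only $k$-dependent estimate, $A^2\ge 1/k$, already points the wrong way.

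Concretely, take $d=k=4$ and $|\psi_\epsilon\rangle=\sqrt{\tfrac{1-\epsilon}{2}}\,(|1\rangle+|2\rangle)+\sqrt{\tfrac{\epsilon}{2}}\,(|3\rangle+|4\rangle)$ with small $\epsilon>0$. Then $A,B\to\frac{1}{\sqrt2}$, so $(A+B)^2\to 2<\frac{20}{9}$. Since $F(\rho,\Delta(\rho))=A$ exactly for pure states, $W+P\to 2-\sqrt2\approx 0.586$; already at $\epsilon=10^{-3}$ one gets $W+P\approx 0.56$. The claimed bound is $2-\frac{2\sqrt5}{3}\approx 0.509$. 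So the statement fails, not only your route.

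The paper's proof makes the same reduction, to $f=\sum_i p_i^{3/2}+\frac{1}{\sqrt d}\sum_i\sqrt{p_i}$, and has the same hole in another guise. Its Lagrange computation correctly gives $f=\frac{c}{3\mu}+\frac{2\mu}{3}\ge\frac{2\sqrt2}{3}\sqrt c$ at every interior critical point, with $\mu=-\lambda$ and $c=\frac{3}{\sqrt d}+\frac kd$. It then treats this as the minimum of $f$ over the open set $\{p_i>0,\ \sum_i p_i=1\}$, where a minimizer need not exist. Since $\partial f/\partial p_i=\tfrac32\sqrt{p_i}+\tfrac{1}{2\sqrt{d\,p_i}}\to+\infty$ as $p_i\to0^+$, $f$ decreases as a small coordinate is pushed to $0$. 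The infimum can then sit on a lower-dimensional face, where $f$ can be as small as $2d^{-1/4}$ (equal weight on $\sqrt d$ paths). Your planned ``minimizer on the $k$-element support with at most two distinct values'' runs into exactly this non-attainment.

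Tellingly, the claimed bound beats the bound $2-2d^{-1/4}$ of Theorem \ref{th duality-1} exactly when $k>\frac32\sqrt d$. When $\sqrt d$ is an integer, that bound is attained by the pure state with equal weight on $\sqrt d$ paths and approached by full-rank perturbations of it. So, at least for square $d$, the statement is false precisely in the range where it would say something new.

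What your estimates do prove is a rank-dependent bound that moves in the right direction. If $\Delta(\rho)$ has rank at most $k\le\sqrt d$, then $A^2\ge 1/k$ and $AB\ge 1/\sqrt d$ give $A+B\ge\frac{1}{\sqrt k}+\sqrt{k/d}$. This is because $A\mapsto A+\frac{1}{\sqrt d\,A}$ is increasing for $A\ge d^{-1/4}$. Hence $W+P\le 2-\frac{1}{\sqrt k}-\sqrt{k/d}$, with equality for the pure state with equal weight on $k$ paths.
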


\begin{proof}
For any quantum state $\rho=\sum_{i,j} \rho_{ij} |i\rangle \langle j|$ in $d$-path interferometer, by the proof in Theorem \ref{th duality-1}, we know
\begin{eqnarray*}
    F(\rho, \Delta(\rho)) + F\left(\Delta(\rho), \frac{I}{d}\right) 
    \geq \displaystyle\sum_{i} \rho_{ii} \sqrt{\rho_{ii}} + \frac{1}{\sqrt{d}} \displaystyle\sum_{i}\sqrt{\rho_{ii}}.
\end{eqnarray*}
Since $\Delta(\rho)$ is of rank $k$,
without loss of generality, suppose $\rho_{ii}> 0$ for $i=1,2,\cdots,k$ and  $\rho_{ii}=0$ for $i=k+1,\cdots,d$, with $1\leq k\leq d$. 
Next we consider the minimum of the function
\begin{equation}\label{eq f}
    f(\rho_{11}, \rho_{22}, \cdots, \rho_{kk})=\sum_i \rho_{ii}\sqrt{\rho_{ii}}+\frac{1}{\sqrt{d}}\sum_i \sqrt{\rho_{ii}},
\end{equation}
with the constraints $\left\{(\rho_{11}, \dots , \rho_{kk})\in \mathbb{R}^k: \rho_{ii}> 0, \sum_{i=1}^k \rho_{ii}=1\right\}$. 
We construct its Lagrangian function
\[
\mathcal{L}(\rho_{ii}, \lambda) = \sum_i \left( \rho_{ii}^{\frac{3}{2}} + \frac{1}{\sqrt{d}} \rho_{ii}^{\frac{1}{2}} \right) + \lambda \left( \sum_i \rho_{ii} - 1 \right).
\]
Then, by derivative we get 
\begin{equation}\label{l}
\frac{\partial \mathcal{L}}{\partial \rho_{ii}} = \frac{3}{2} \rho_{ii}^{\frac{1}{2}} + \frac{1}{2\sqrt{d}} \rho_{ii}^{-\frac{1}{2}} + \lambda = 0,\ \forall i,
\end{equation}
which implies $\lambda < 0$. Multiplying  Eq. (\ref{l}) by $2\sqrt{\rho_{ii}}$ and summing over all $i$, and then applying $\sum_{i}\rho_{ii}=1$, we obtain
%\begin{equation}
 %   3+\sqrt{d}+2\lambda \sum_i \sqrt{\rho_{ii}}=0.
%\end{equation}
%This equation gives rise to 
\begin{equation}\label{eq sqrt rho}
     \sum_i \sqrt{\rho_{ii}}=  -\frac{3 + \frac{k}{\sqrt{d}}}{2\lambda}.
\end{equation}
Multiplying  Eq. (\ref{l}) by $2{\rho_{ii}}$ and summing over all $i$, and then applying Eq. (\ref{eq sqrt rho}), we obtain
\begin{equation}\label{eq sqrt rho11}
     \sum_{i} \rho_{ii}^{\frac{3}{2}} = \frac{3 + \frac{k}{\sqrt{d}}}{6\lambda\sqrt{d}} - \frac{2\lambda}{3}.
\end{equation}
Inserting Eqs. (\ref{eq sqrt rho}) and (\ref{eq sqrt rho11}) into the function $f$, we derive the function $f$ as the function of $\lambda$ as
\begin{equation}\label{eq f 2}
f(\lambda) = -\frac{3 + \frac{k}{\sqrt{d}}}{3\lambda\sqrt{d}} - \frac{2\lambda}{3}.
\end{equation}
Now we consider the minimum of the function $f$ in Eq. (\ref{eq f 2}) in the interval $(-\infty, 0)$.
Taking the derivative of $f$ in Eq. (\ref{eq f 2}), we get
\[
f'(\lambda) = \frac{3 + \frac{k}{\sqrt{d}}}{3\sqrt{d}{\lambda^2}} - \frac{2}{3},
\quad
f''(\lambda) = -\frac{2\left(3 + \frac{k}{\sqrt{d}}\right)}{3\sqrt{d}{\lambda^3}}.
\]
Since $\lambda < 0$, which implies $f''(\lambda) > 0$. Thus, $f(\lambda)$ is a convex function for $\lambda < 0$.
Setting $f'(\lambda) = 0$, we solve for the unique stationary point $\lambda_0 =-\sqrt{\frac{3 + \frac{k}{\sqrt{d}}}{2\sqrt{d}}} < 0.$
By convexity, $\lambda_0$ is the global minimum of $f(\lambda)$. The boundary behavior of the function is given by
$\lim_{\lambda\to 0^-} f(\lambda) = +\infty$ and $\lim_{\lambda\to -\infty} f(\lambda) = +\infty$.
Therefore the minimum value of $f$ in the interval $(-\infty, 0)$ is 
\begin{equation}
\label{fmin}
f_{\min} = \frac{2\sqrt{2}}{3} \sqrt{\frac{3}{\sqrt{d}} + \frac{k}{d}}.
\end{equation}
This demonstrates that the minimum of $f$ in Eq. (\ref{eq f}) is $f_{\min} $.
\end{proof}

Theorem \ref{th duality} provides the wave-particle duality depending on the path number $d$ as well as the rank $k$ of $\Delta(\rho)$.
This is the situation that  we only focus on the $k$ paths in the $d$-path interferometer. It is analogous with  a post-selecting process on the interfering port. 
For example, in the Franson interferometer, there are four possible
paths but post-selecting on coincidence counts discards two of these paths, which is factually a binary interferometer  \cite{Coles}.
%Similarly, although there are $d$ paths in the interferometer, the number of actually effective paths is $k$, where $k \le d$. In this case, $\Delta(\rho)$  considered is of rank $k$. 
Moreover, this upper bound in Eq. (\ref{eq dualtiy}) is strictly better than that in Theorem \ref{th duality-1} when $k > \frac{3}{2}\sqrt{d}$. Therefore, 
if we know the rank of $\Delta(\rho)$ for the input quantum state, then the sum of the waveness and particleness can be estimated better.
In particular, if $\Delta(\rho)$ is full rank, that is $k=d$, then the  wave-particle duality in Theorem \ref{th duality} turns into
\begin{equation}\label{eq full delta rho}
W(\rho) + P(\rho) \leq 2 - \frac{2\sqrt{2}}{3} \sqrt{\frac{3}{\sqrt{d}} + 1}. 
\end{equation}
In this case, the upper bound in Eq. (\ref{eq full delta rho}) is strictly better than that in Eq. (\ref{eq dualtiy-1})
for $d\geq 3$.
Furthermore, when the path number $d$ tends to infinity,
the upper bound in Eq. (\ref{eq full delta rho}) tends to $2-\frac{2\sqrt{2}}{3}\approx1.05719$. This demonstrates the upper bound gets more precise as $d$ gets larger.

It is intriguing to explore the tight wave-particle duality in the multi-path interferometer. As both $ W(\rho)$ and  $P(\rho)$ are convex, now we consider the maximum of $ W(\rho)+ P(\rho)$ for pure states.
For the pure states $|\psi\rangle=\frac{1}{\sqrt{d}} \sum_{i=1}^d  |i\rangle$ and $|\psi\rangle=|0\rangle$, by direct calculation, we find that they both satisfy $ W(|\psi\rangle) + P(|\psi\rangle)=1-\frac{1}{\sqrt{d}}.$
One intuitive upper bound for the sum of waveness and particleness might be 
\begin{eqnarray}\label{eq linear wp}
    W(|\psi\rangle) + P(|\psi\rangle)\leq 1-\frac{1}{\sqrt{d}}.
\end{eqnarray} 
In fact, it is definitely true for $d=2$ as proved in Theorem \ref{th 2 duality}. However, for $d=3$, 
we have generated  \( n = 10,000 \) randomly sampled pure quantum states and calculated their waveness and particleness. 
Our numerical results demonstrate that there exist some pure states violating the bound, that is, \(W(\rho) + P(\rho) > 1 - \frac{1}{\sqrt{3}}\) (See FIG. \ref{fig:figure1}).

\begin{figure}[htbp] 
    \centering 
    \includegraphics[width=0.4\textwidth]{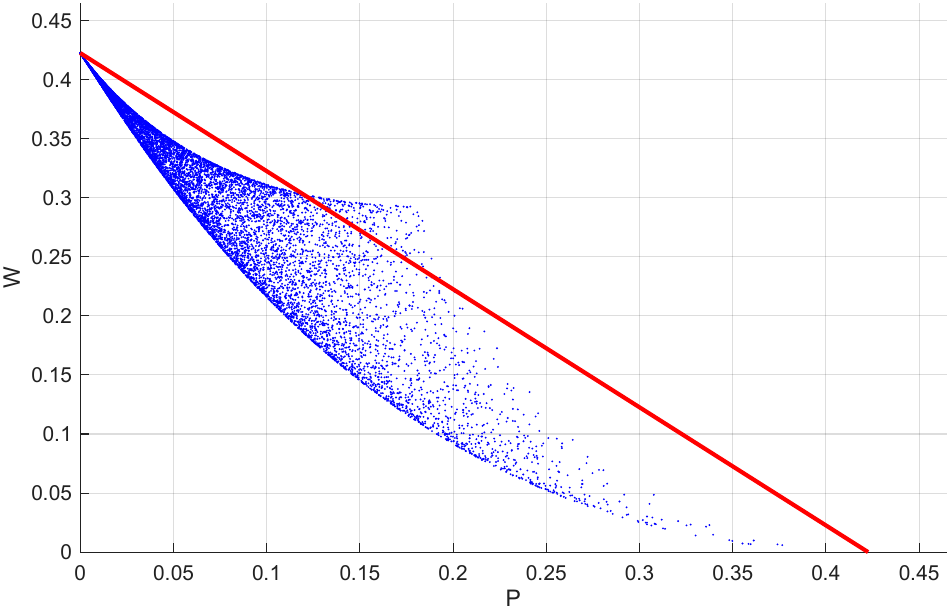} 
    \caption{The $P-W$ diagrams for the randomly sampled pure states in three-path interferometer. The red line represents \( W + P = 1 - \frac{1}{\sqrt{3}} \). The blue points correspond to the coordinates $(P, W)$ for \( n = 10,000 \) randomly sampled pure quantum states.}
    \label{fig:figure1}
\end{figure}

Since the linear form in Eq. (\ref{eq linear wp}) is imprecise for describing the constraint on waveness and particleness for high dimensional systems, we make an attempt to the quadric form
\begin{eqnarray}\label{eq qua form}
    W^2+P^2\leq 1-\frac{1}{\sqrt{d}}.
\end{eqnarray}
As shown in Fig. \ref{fig:figure3}, for 5000 randomly generated pure states in quantum systems with dimensions $d = 3, 10, 25, 50$, all the sampled points lie inside the region bounded by $W^2+P^2=1-\frac{1}{\sqrt{d}}$. This shows that the quadric form in Eq. (\ref{eq qua form}) probably gives a description of the wave-particle duality in high dimensional systems.

\begin{figure}[htbp] 
    \centering 
    \includegraphics[width=0.8\textwidth]{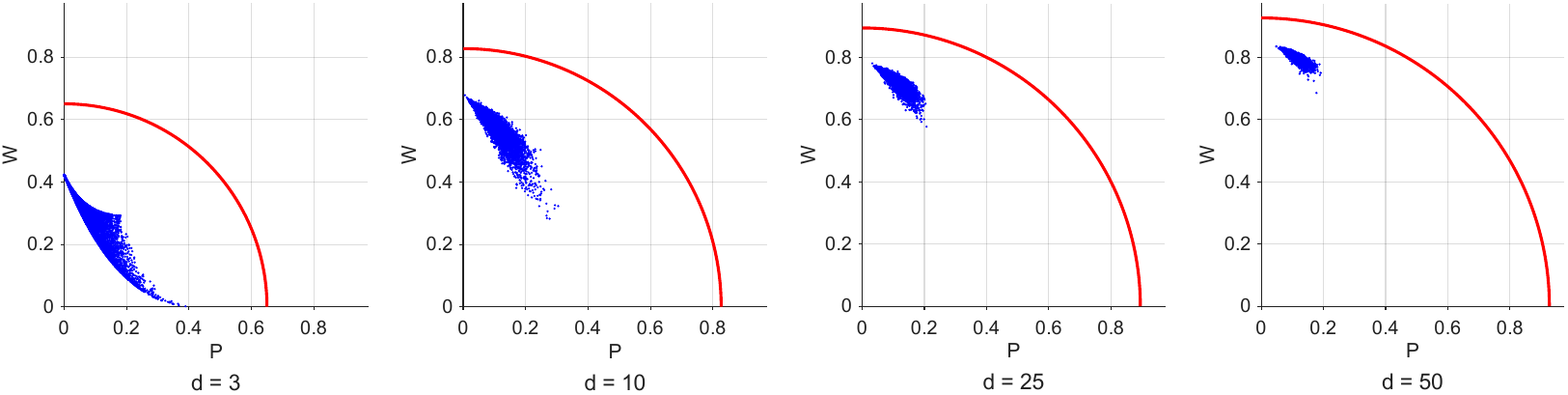} 
    \caption{The $P-W$ diagrams for randomly sampled pure quantum states in dimensions
    $d = 3, 10, 25, 50$.
    The red curve represents the boundary $W^2 + P^2 = 1 - \frac{1}{\sqrt{d}}$,
    and the blue points denote the coordinates $(P, W)$ for $n = 5000$
    randomly generated pure states.}
    \label{fig:figure3}
\end{figure}

\section{Conclusions}
\label{Conclusions}

To summarize, we have studied the coherence measures in the SIO framework. We
have presented a method to construct coherence measures using the diagonal state $\Delta(\rho)$. This class of coherence measures includes some existing coherence measures in the form of the diagonal state. Specifically, we have proposed a coherence measure $C_{\mathrm{diag}}^{F}$ in terms of fidelity and presented its analytical expressions. 
The relationships between the proposed coherence measure $C_{\mathrm{diag}}^{F}$ and some other coherence measures are established.
Furthermore, we prove the monotonicity of $C_{\mathrm{diag}}^{F}$ under any IOs in qubit systems. So $C_{\mathrm{diag}}^{F}$ is indeed a coherence measure in the IO framework in qubit systems. 
Last but not least, we discuss the role of the coherence measures in the SIO framework in characterizing the waveness in the multi-path interferometer. We derive several wave-particle dualities in terms of fidelity.
%Overall, this work not only underlines the difference between the coherence measures in the IO and SIO frameworks, but also further breaks the equivalence between the coherence in the IO framework and waveness in the multi-path interferometer.

It is interesting to consider the extension of the coherence
measure $C_{{\rm diag}}^{F}$ to the coherence resource theory \cite{JWXu2016} in infinite-dimensional
(continuous-variable) systems, particularly the (bosonic) Gaussian states, and discuss the corresponding wave-particle duality in the interferometric setups.
However, due to the divergence and computational issues in such systems, this problem is challenging and constitutes a promising direction for future research.

\section{Acknowledgement}
We would like to thank Jianwei Xu and Yue Sun for valuable discussions.
We thank also the anonymous referees for helpful
comments.
M. J. Zhao thanks the center for Quantum Information,
Institute for Interdisciplinary Information Sciences,
Tsinghua University for hospitality.
This work is supported by the National Natural Science Foundation of China under grant No. 12171044.

\end{document}